\newsavebox{\imagebox}
\newcounter{tech}
\definecolor{nred}{rgb}{0.7,0.2,0.2}
\definecolor{npink}{RGB}{222,51,137}
\definecolor{ngreen}{RGB}{21,122,81} 
\definecolor{nblue}{RGB}{45,73,188}
\definecolor{nsteelblue}{RGB}{49,96,136}
\definecolor{npurple}{RGB}{159,40,246}
\definecolor{nblack}{rgb}{0,0,0}
\definecolor{mangotango}{rgb}{1.0, 0.51, 0.26}
\definecolor{urlblue}{RGB}{30,19,156}
\definecolor{darkgreen}{RGB}{49,113,104}
\definecolor{tssteelblue}{RGB}{70,130,180}
\definecolor{tssteelorange}{RGB}{161,57,65}
\definecolor{tsorange}{RGB}{255,138,88}
\definecolor{tsblue}{RGB}{23,74,117}
\definecolor{tsyellow}{RGB}{255,185,88}
\definecolor{tsgrey}{RGB}{200,200,200}
\lstdefinestyle{mystyle}{
}
\newcommand{\ten}{\otimes}
\newcommand{\one}{\mathds{1}} 
\DeclareFontFamily{OT1}{pzc}{}
\DeclareFontShape{OT1}{pzc}{m}{it}{ <-> s*[1.2] pzcmi7t }{}
\DeclareMathAlphabet{\mathpzc}{OT1}{pzc}{m}{it}
\DeclareDocumentCommand\Vec{ s m }
{ 
	\IfBooleanTF{#1}	{\vphantom{#2}\left\lvert\smash{#2}\right\rangle\!\rangle} 
	{\left\lvert{#2}\right\rangle\!\rangle} 
}
\def\N{\mathcal{N}}
\newcommand{\+}{^{\dagger}} 
\def\bea{\begin{eqnarray}}
\def\eea{\end{eqnarray}}
\def\bean{\begin{eqnarray*}}
\def\eean{\end{eqnarray*}}
\theoremstyle{definition}
\newtheorem*{rst*}{Result}
\newtheorem{dfn}{Definition}
\newtheorem{thm}{Theorem}
\newtheorem{lem}[thm]{Lemma}
\newtheorem{cor}[thm]{Corollary}
\newtheorem{exm}{Example}
\newtheorem*{exm*}{Example}
\newtheorem*{obs*}{Observation}
\newtheorem*{rmk*}{Remark}
\begin{document}

\title{Exact and approximate conditions of tabletop reversibility: when is Petz recovery cost-free?
}

\author{Minjeong Song}
\email{song.at.qit@gmail.com}
\affiliation{Centre for Quantum Technologies, National University of Singapore}

\author{Hyukjoon Kwon}
\affiliation{School of Computational Sciences, Korea Institute for Advanced Study (KIAS), Seoul 02455, Korea}

\author{Valerio Scarani}
\affiliation{Centre for Quantum Technologies, National University of Singapore}
\affiliation{Department of Physics, National University of Singapore, 2 Science Drive 3, Singapore 117542}

\begin{abstract}
    Channels $\mathcal{N}$ that describe open quantum dynamics are inherently irreversible: it is impossible to undo their effect completely, but one can study partial recovery of the information. The Petz recovery map $\hat{\N}_{\gamma}^{(\texttt{P})}$ is a systematic construction that depends only on $\mathcal{N}$ and on a reference state $\gamma$, which will be recovered exactly. If the real input state was different from $\gamma$, the recovery is partial, with a guarantee of near-optimality. Generically, an implementation of the Petz recovery map would look very different from the implementation of the channel. It is natural to study under which conditions the two maps require similar or even identical resources. The noisy forward channel $\mathcal{N}$ is called \emph{tabletop time-reversible} for a given $\gamma$ when the corresponding Petz recovery map is realizable in such a way. First, we study the exact tabletop reversibility (TTR) conditions. We show in particular that a time-sensitive control of an ancilla system is needed. Second, we present the approximate TTR conditions, which do not require such a time-sensitive control. Third, we derive Lindbladian TTR conditions under a random-time collision model. 
\end{abstract}

\maketitle

\section{Introduction}

All practical quantum information tasks confront dissipative noise from the surrounding bath over time~\cite{breuer2002oqs}. This inevitable phenomenon leads to irreversibility of noisy quantum dynamics. Nonetheless, such noisy quantum dynamics can be recovered by considering recovery channels for some quantum states. A recovery channel, known as a \emph{Petz recovery map}, has particularly proven useful in many research areas. Quantum error correction is one of its prominent applications
since a Petz recovery map for a code space is shown to be the optimal recovery when the code is perfectly correctable~\cite{barnum2002reversing}. Moreover, it performs nearly optimal compared to existing quantum error correcting codes together with recovery algorithms when the code is not too far from perfect codes~\cite{barnum2002reversing,ng2010simple,chen2020petz,cree2022approximate,zheng2024nearqec,biswas2025noise,li2025optimality}.

It is a natural figure of merit for a recovery what states can be exactly recovered, and if approximate, how close the recovered state is from the initial state before noise. 
The former has been answered in the original paper by Petz~\cite{petz1986sufficient,petz1988sufficiency} (see also Ref.~\cite{petz2003monotonicity} for a less technical proof). It was shown that a Petz recovery map successfully recovers states when they satisfy some entropic conditions, i.e., when the relative entropy between the state to recover and a so-called reference state remains unchanged under the noisy dynamics~\cite{hayden2004qmarkov}. An approximation of this condition concerns a question as to how the changes in the relative entropy  affect the distance between the initial state and the recovered state. A Petz recovery map and its variants have then enabled an information theoretic formulation of recoverability in terms of relative entropies~\cite{wilde2015recoverability,sutter2016strengthened,junge2018universal}. They also made appearance in a plethora of studies on recoverability in the special case concerning a tripartite system in terms of the conditional mutual information~\cite{fawzi2015cmi,sutter2016universal,kim2021qmp_cmi,hu2024petzcmi,kim2024learning}. Namely, if a conditional mutual information between two subsystems conditioning the other is zero, a Petz recovery map can recover a global state from a local state of the two subsystems only and the global state is called a quantum Markov chain~\cite{hayden2004qmarkov}. A Petz recovery map manifests in many other quantum foundation. To name a few, it has been adopted as a crucial tool for formulating quantum fluctuation theorems~\cite{aberg18fully, kwon2019fluctuation, aw2021fluctuation,buscemi2021fluctuation} and has appeared as one of the promising candidates of quantum Bayesian retrodiction in developing quantum Bayes' rule~\cite{leifer2013towards,Tsang2022expectations,parzygnat2023axioms,parzygnat2023qbayes,aw2023quasi,scandi2023qfi,surace2023state,bai2025minimal,liu2025retrodictive,liu2025complete,liu2025unifying}. 

Apart from its fundamental importance, the implementation of Petz recovery maps is substantial to make it practically useful. There have been a few developments toward this direction. The first attempt to implement  Petz recovery maps approximately was given in Ref.~\cite{gilyen2022Petz}. Later, circuit-design approaches have been  proposed based on the resource availability in experimental setups (e.g. the circuit depth or the number of ancilla qubits) in the context of quantum error correction focusing on a code space~\cite{biswas2024circuit_Petz}. Subsequently, an experimental realization of Petz recovery maps has been recently developed for a qubit decoherence channel by designing exact quantum circuits in trapped ion platforms~\cite{png2025petz}, and in NMR platforms~\cite{singh2025petz_nmr}. 

Here, we approach this problem in a different perspective--- we ask when a Petz recovery map can be implemented without requiring additional devices than the existing ones on the table. Specifically, we ask when a Petz recovery map can be realized with the same devices that were available for the forward dynamics, thus enabling `cost-free' but still nearly optimal reversal. This approach was first introduced in Ref.~\cite{aw2024role}, and the forward dynamics is said to be \emph{tabletop time-reversible} (TTR) when such a realization is possible.

In this work, we study conditions of TTR in detail. We present the exact conditions of TTR in a directly usable form and we demonstrate it with illustrative examples. These conditions provide an insightful connection with the sufficient condition of TTR, product preservation, previously found in Ref.~\cite{aw2024role}---it provides a straightforward explanation of why the product preservation is a strictly sufficient, but not necessary condition. Yet, the exact TTR conditions require a control of an ancilla system in a timely manner in general. We thus turn to approximate TTR conditions by which we circumvent such a time-sensitive control and present the first two non-trivial approximate TTR conditions; first-order approximate TTR and second-order approximate TTR. While such approximate TTR conditions contain complex terms in general, we present simpler forms of approximate TTR conditions for some special choices of a reference state. This includes the first-order approximate implementation of the class of Petz recovery maps relevant to quantum error correction. In addition, we present Lindbladian conditions of approximate TTR, with which one can implement Petz recovery maps sequentially. TTR-based approaches enable the implementation of Petz recovery maps, which does not require excessive resources as the unitary operations for the reversal are considered accessible and require only at most the same number of ancilla qubits as that of ancilla for the forward dynamics.

\section{Background and Definitions}

\begin{figure*}[ht]
    
    \subfloat[Dilation of Quantum Dynamics]{\includegraphics[width=0.42\textwidth]{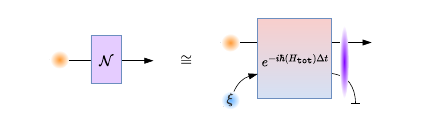}}
    \hspace{4em}
    \subfloat[Tabletop Time-reversible Dynamics]{\includegraphics[width=0.5\textwidth]{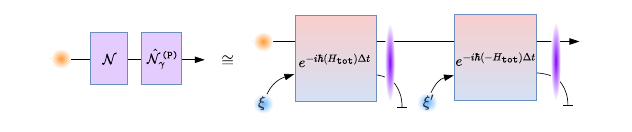}}
    
    \caption{{\bf Tabletop Time-reversibility.} The shaded in orange is a system of interest, and the shaded in blue is an environmental system which is initially uncorrelated with the system. The shaded in purple after an interaction depicts  potential correlations between the system and the environment.} 
    \label{fig:ttr}
\end{figure*}

An `experiment-friendly' realization of a dynamics of a quantum system \texttt{S} is done by a combination of i) appending an ancillary system \texttt{E} in a state $\xi$, ii) acting a joint unitary transformation $U$ on the total systems \texttt{SE}, lastly, iii) discarding the ancillary system back in order \cite{wilde2013qit}. The tuple $(U,\xi)$ defines a quantum channel 
\begin{eqnarray}
    \N(\bullet) = \Tr_{\texttt{E}}\left(U_{\texttt{SE}}(\bullet\ten\xi_{\texttt{E}}) U_{\texttt{SE}}^\dagger\right), \label{eq:exp_friend}
\end{eqnarray}
and we will often use a tuple of a joint unitary and ancilla state to represent a respective quantum channel. Its Petz recovery map is defined as follows. 

\begin{dfn}[Petz recovery map]
    Given a channel $\N$, its \textit{Petz recovery map} for a reference state $\gamma$ is defined as
\begin{eqnarray}
    \hat{\N}_{\gamma}^{(\texttt{P})}(\bullet)\equiv\sqrt{\gamma}\N^\dagger\qty(\frac{1}{\sqrt{\N(\gamma)}}
    \bullet\frac{1}{\sqrt{\N(\gamma)}}
    )\sqrt{\gamma}.
\end{eqnarray}    
Henceforth, we will call $\gamma$ a prior state following the convention in quantum Bayesian retrodiction.
\end{dfn}

Notably, choosing a pure prior state always makes the Petz recovery map trivial---it becomes an erasure channel that outputs the prior state, regardless of its input. That is, if one holds a too strong belief in the prior state, the reversal always outputs the prior state. In what follows, we will consider full rank priors only so as to avoid such trivial cases. In addition, we assume that the prior state after the forward dynamics remains full rank, as otherwise the Petz recovery maps are ill-defined or not uniquely defined \cite{aw2024role}. 

Though written as a sequential composition of three maps, the Petz recovery map should not be implemented as such a composition, because those separate maps are not, in general, trace-preserving, while their composition is. Rather, one should find its form as in Eq.~\eqref{eq:exp_friend}, which is known only for a few cases. A prominent example is thermal operations from quantum thermodynamics: 
\begin{align*}
    \mathcal{T}(\bullet)=\Tr_{\texttt{E}}\left[ U_{\texttt{SE}} (\bullet\ten \rho_{\texttt{th}}^{(\beta)}(H_{\texttt{E}})) U_{\texttt{SE}}^\dagger \right],
\end{align*}
is a thermal operation if $\comm{U_{\texttt{SE}}}{H_{\texttt{S}}+H_{\texttt{E}}}=0$, with $H_{\texttt{S}}$ and $H_{\texttt{E}}$ the free Hamiltonians of the system and the environment, and $\rho_{\texttt{th}}^{(\beta)}(H_{\texttt{E}})=\frac{e^{-\beta H_{\texttt{E}}}}{\Tr e^{-\beta H_{\texttt{E}}}}$ is the Gibbs state of the environment with $\beta=\frac{1}{k_{\texttt{B}}T}$. In this case, the Petz recovery map with prior $\gamma=\rho_{\texttt{th}}^{(\beta)}(H_{\texttt{S}})$ consists of inverting time in $U_{\texttt{SE}}$ while interacting with the same environment \cite{alhambra2018work}: 
\begin{align*}
    \hat{\mathcal{T}}_{\rho_{\texttt{th}}^{(\beta)}(H_{\texttt{S}})}^{(\texttt{P})}(\bullet)= \Tr_{\texttt{E}}\left[ U_{\texttt{SE}}^\dagger (\bullet\ten \rho_{\texttt{th}}^{(\beta)}(H_{\texttt{E}})) U_{\texttt{SE}} \right].
\end{align*}
Here, implementing the Petz recovery map
does not require building additional devices that carry out a joint unitary for the reversal--- it simply uses the existing devices from the forward dynamics. 

Generalizing from this example, we introduce the following definition:
\begin{dfn} [Tabletop reverse map]
    Given a channel $\mathcal{N}(\bullet)=\Tr_{\texttt{E}}\left(U_{\texttt{SE}}(\bullet\ten\xi_{\texttt{E}}) U_{\texttt{SE}}^\dagger\right)$, its \textit{tabletop reverse map} with an ancilla in a state $\xi'$ is defined as
\begin{eqnarray}
    \hat{\N}_{\xi'}^{(\texttt{T})}(\bullet)\equiv\Tr_{\texttt{E}}\left(U_{\texttt{SE}}^\dagger(\bullet\ten\xi'_{\texttt{E}}) U_{\texttt{SE}}\right).
\end{eqnarray} This is depicted in \cref{fig:ttr}. Then we say that the channel $\N$ [or, equivalently, $(U,\xi)$] is \emph{tabletop time-reversible} (TTR) for a prior $\gamma$, if there exists $\xi'=\xi'_\gamma$ such that its Petz recovery map $\hat{\N}_{\gamma}^{(\texttt{P})}$ is given by 
\begin{eqnarray}
    \hat{\N}_{\gamma}^{(\texttt{P})}=\hat{\N}^{(\texttt{T})}_{\xi'_\gamma}. \label{eq:TTRdef}
\end{eqnarray}
In other words, if the Petz recovery map with prior $\gamma$ can be implemented by appending an ancilla and running \textit{the same evolution} $U_{\texttt{SE}}$ backwards.
\end{dfn}

Once the dynamics is given by $(U,\xi)$ and a prior $\gamma$ is chosen, the existence of $\xi'$ with which $(U,\xi)$ becomes TTR can be checked by a semidefinite program (SDP)~\cite{skrzypczyk2023sdp}. Specifically, given $(U,\xi)$ and $\gamma$, the condition for TTR can be cast as the feasibility SDP~\cite{Note_sdp}
\begin{eqnarray}
\begin{aligned}
    \text{Minimize}& \quad 0 \\ 
    \text{subject to}& \quad \xi' \ge 0\\
    & \,\, \Tr\xi'=1\\
    & \,\, \chi_{\xi'}^{(\texttt{T})} = \chi_{\gamma}^{(\texttt{P})}\,,
\end{aligned}    
\end{eqnarray}
where $\chi_{\gamma}^{(\texttt{P})}$ and $\chi_{\xi'}^{(\texttt{T})}$ are the Choi matrices~\cite{choi75} of $\hat{\N}_{\gamma}^{(\texttt{P})}$ and $\hat{\N}^{(\texttt{T})}_{\xi'}$, respectively.

While the problem of TTR can be solved with SDP numerically for any given example, a much deeper understanding is gained by studying it analytically. The first work on tabletop reversibility~\cite{aw2024role} identified \textit{product preservation} as a sufficient condition of tabletop reversibility. Product preservation holds for thermal operations and Gibbs states at the same temperature:  
\begin{align}
    U_{\texttt{SE}}\left(\rho_{\texttt{th}}^{(\beta)}(H_{\texttt{S}})\ten \rho_{\texttt{th}}^{(\beta)}(H_{\texttt{E}})\right) U^\dagger_{\texttt{SE}} = \rho_{\texttt{th}}^{(\beta)}(H_{\texttt{S}})\ten \rho_{\texttt{th}}^{(\beta)}(H_{\texttt{E}})\,.
\end{align}
By generalizing this, a unitary $U_{\texttt{SE}}$ is said to be product preserving with respect to $\gamma\ten\xi$ if it satisfies 
\begin{align}
    U_{\texttt{SE}}(\gamma\ten \xi) U^\dagger_{\texttt{SE}} = \gamma'\ten \xi' \label{eq:product_preserving}
\end{align}
for some states $\gamma',\xi'$. Product preservation has been fully characterized for the case when both the system and the environment are one qubit~\cite{aw2024role}. In the same paper, some counterexamples showed that product preservation is not a necessary condition: there are tabletop time-reversible dynamics that do not arise from product preservation.

Throughout this paper, we will use the following spectral decompositions as standard notations, unless otherwise specified:
\begin{align}
	\begin{aligned}
		\gamma\stackrel{s.d}{=}&\sum_n r_n \op*{\lambda_n}, \quad \gamma'=\N(\gamma)\stackrel{s.d}{=}\sum_m r'_m \op*{\lambda'_m},\\
		\xi\stackrel{s.d}{=}&\sum_k p_k \op*{e_k}, \quad \xi'\stackrel{s.d}{=}\sum_j p'_j \op*{e'_j}.   
	\end{aligned}
\end{align}

\section{Exact TTR}

The channel $\N$ defined by $(U,\xi)$ can be written as 
\begin{align}
\begin{aligned}
    \N(\bullet) &= \Tr_{\texttt{E}}\left(U_{\texttt{SE}}(\bullet\ten\xi_{\texttt{E}}) U_{\texttt{SE}}^\dagger\right)\\
	&= \sum_{j,k} p_k \bra*{f_j}_{\texttt{E}}U_{\texttt{SE}}\ket*{e_k}_{\texttt{E}} \bullet \bra*{e_k}_{\texttt{E}}U_{\texttt{SE}}^\dagger \ket*{f_j}_{\texttt{E}},
\end{aligned}
\end{align}
for any orthonormal basis $\{\ket*{f_j}\}_j$ of system $E$, and where we used the shorthand $\ket{e_k}_{\texttt{E}}$ to represent $\one_{\texttt{S}}\ten\ket{e_k}_{\texttt{E}}$. The Kraus operators of $\N$ are then given by as $N_{jk} = \sqrt{p_k}\bra*{f_j}_{\texttt{E}}U_{\texttt{SE}}\ket*{e_k}_{\texttt{E}}$. The Kraus representation is not unique because the $\{\ket*{f_j}\}_j$ can be chosen arbitrarily. For us, it is convenient to choose $\ket*{f_j}:=\ket*{e'_j}$ the eigenvectors of $\xi'$, so that
	\begin{align}
		N_{jk} = \sqrt{p_k}\bra*{e'_j}_{\texttt{E}}U_{\texttt{SE}}\ket*{e_k}_{\texttt{E}}.
	\end{align}

Similar to the forward channel $\N$, we represent a Petz recovery map $\hat{\N}_{\gamma}^{(\texttt{P})}$ and tabletop reverse map $\hat{\N}^{(\texttt{T})}_{\xi'}$ via their Kraus operators
\begin{align}
	\begin{aligned}
		\hat{N}_{jk}^{(\texttt{P})} =&
		\sqrt{p_k}\sqrt{\gamma}\bra*{e_k}_{\texttt{E}}U^\dagger_{\texttt{SE}}\ket*{e'_j}_{\texttt{E}}\frac{1}{\sqrt{\gamma'}}, \qand \\
		\hat{N}_{jk}^{(\texttt{T})} =&
		\sqrt{p'_j}\bra*{e_k}_{\texttt{E}}U^\dagger_{\texttt{SE}}\ket*{e'_j}_{\texttt{E}} \label{eq:kraus}
	\end{aligned}
\end{align}
respectively. The product preservation condition \eqref{eq:product_preserving} can be then recast as 
\begin{align}
		&U_{\texttt{SE}}(\gamma\ten \xi) U^\dagger_{\texttt{SE}} = \gamma'\ten \xi' \nonumber\\
		\Leftrightarrow& (\gamma\ten \xi)U_{\texttt{SE}}^\dagger  = U_{\texttt{SE}}^\dagger (\gamma'\ten \xi') \nonumber\\
		\Leftrightarrow& \sqrt{p_k}\sqrt{\gamma}\bra*{e_k}_{\texttt{E}}U^\dagger_{\texttt{SE}}\ket*{e'_j}_{\texttt{E}}  =  \sqrt{p'_j}\bra*{e_k}_{\texttt{E}}U^\dagger_{\texttt{SE}}\ket*{e'_j}_{\texttt{E}} \sqrt{\gamma'}, \, \forall \, j,k\nonumber\\
		\Leftrightarrow& \hat{N}_{jk}^{(\texttt{P})} = \hat{N}_{jk}^{(\texttt{T})}, \, \forall \, j,k. 
\end{align}
This equation obviously implies the TTR condition \eqref{eq:TTRdef} and clarifies the algebraic condition underlying product preservation: the Kraus operators of the two maps are equal element-wise. The converse is known to be false through the existence of counterexamples~\cite{aw2024role}. These observations then lead to the following criterion for TTR:
\begin{thm}\label{rst:exactTTR}
	A channel $\mathcal{N}$ is TTR with regard to $\xi'$ for $\gamma$ if and only if $\forall m_1,m_2,n_1,n_2$,
	\begin{align}
			\sum_{jk} \left(p_k\sqrt{r_{n_1}r_{n_2}}-p'_j\sqrt{r'_{m_1}r'_{m_2}}\right)\Phi_{m_1 j\leftarrow n_1k}\Phi^*_{m_2 j\leftarrow n_2k} = 0, \label{eq:iff}
	\end{align}
	defining the transition matrix
	\begin{align}
		\Phi_{mj\leftarrow nk}\equiv\bra*{\lambda'_m}_{\texttt{S}}\bra*{e'_j}_{\texttt{E}}U_{\texttt{SE}}\ket*{\lambda_n}_{\texttt{S}}\ket*{e_k}_{\texttt{E}},     
	\end{align}
	where $c^*$ represents the complex conjugate of $c$.
\end{thm}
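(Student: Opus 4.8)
The plan is to convert the operator identity $\hat{\N}_{\gamma}^{(\texttt{P})}=\hat{\N}^{(\texttt{T})}_{\xi'}$ of \eqref{eq:TTRdef} into the scalar family \eqref{eq:iff} by comparing Choi matrices entrywise. The starting point is the Choi--Jamio{\l}kowski isomorphism: two completely positive maps coincide if and only if their Choi matrices agree, so I would prove that $\hat{\N}_{\gamma}^{(\texttt{P})}=\hat{\N}^{(\texttt{T})}_{\xi'}$ is equivalent to $\chi_{\gamma}^{(\texttt{P})}=\chi_{\xi'}^{(\texttt{T})}$ (this is exactly the equality already used in the feasibility SDP). The decisive structural feature, visible in \eqref{eq:kraus}, is that both maps are written with Kraus operators carrying the \emph{same} index $(j,k)$. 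Hence each Choi matrix is a single sum $\sum_{jk}\Ketbra{\hat{N}_{jk}}{\hat{N}_{jk}}$ over a common label, where $\Ket{A}$ denotes the vectorization of $A$, and equality reduces to matching these two sums entry by entry in the product eigenbasis $\{\ket{\lambda_n}\ten\ket{\lambda'_m}\}$ built from the input and output spaces of $\N$.

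Next I would compute these matrix elements explicitly from \eqref{eq:kraus}. Using $\bra{\lambda_n}\sqrt{\gamma}=\sqrt{r_n}\bra{\lambda_n}$ and $\tfrac{1}{\sqrt{\gamma'}}\ket{\lambda'_m}=\tfrac{1}{\sqrt{r'_m}}\ket{\lambda'_m}$, and recognizing $\bra{\lambda_n}\bra{e_k}_{\texttt{E}}U^\dagger_{\texttt{SE}}\ket{\lambda'_m}\ket{e'_j}_{\texttt{E}}=\Phi^*_{mj\leftarrow nk}$, one finds
\begin{align*}
\bra{\lambda_n}\hat{N}^{(\texttt{P})}_{jk}\ket{\lambda'_m}&=\sqrt{p_k}\,\frac{\sqrt{r_n}}{\sqrt{r'_m}}\,\Phi^*_{mj\leftarrow nk},\\
\bra{\lambda_n}\hat{N}^{(\texttt{T})}_{jk}\ket{\lambda'_m}&=\sqrt{p'_j}\,\Phi^*_{mj\leftarrow nk}.
\end{align*}
The factor $1/\sqrt{r'_m}$ is finite precisely because $\gamma'=\N(\gamma)$ is assumed full rank, which is where that standing hypothesis enters the argument.

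Finally I would assemble the Choi entries $\chi_{(n_1m_1)(n_2m_2)}=\sum_{jk}\bra{\lambda_{n_1}}\hat{N}_{jk}\ket{\lambda'_{m_1}}\,\bra{\lambda_{n_2}}\hat{N}_{jk}\ket{\lambda'_{m_2}}^{*}$ for each map, impose $\chi_{\gamma}^{(\texttt{P})}=\chi_{\xi'}^{(\texttt{T})}$, and clear the common nonzero factor $1/\sqrt{r'_{m_1}r'_{m_2}}$. Collecting terms gives, for all $n_1,n_2,m_1,m_2$,
\begin{align*}
\sum_{jk}\left(p_k\sqrt{r_{n_1}r_{n_2}}-p'_j\sqrt{r'_{m_1}r'_{m_2}}\right)\Phi^*_{m_1 j\leftarrow n_1 k}\Phi_{m_2 j\leftarrow n_2 k}=0.
\end{align*}
Since the real coefficient $p_k\sqrt{r_{n_1}r_{n_2}}-p'_j\sqrt{r'_{m_1}r'_{m_2}}$ is unaffected by conjugation, taking the complex conjugate of this identity turns $\Phi^*_{m_1}\Phi_{m_2}$ into $\Phi_{m_1}\Phi^*_{m_2}$ and recovers \eqref{eq:iff} verbatim. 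Because every step is an equivalence, this establishes the ``if and only if''.

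I expect the only genuine work to be notational bookkeeping: fixing the vectorization/Choi convention and keeping straight which eigenbasis belongs to the input versus the output space, given that $\hat{N}_{jk}$ maps the output space of $\N$ back to its input space. The one conceptually important---though short---observation is the shared Kraus index $(j,k)$: it is what makes the entrywise Choi comparison collapse to the single displayed family, rather than forcing one to account for the unitary gauge freedom of generic Kraus decompositions. Everything else is linear algebra, and the full-rank hypothesis on $\gamma'$ is exactly what licenses clearing the denominators.
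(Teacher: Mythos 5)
Your proposal is correct and is essentially the paper's own proof in Choi-matrix language: the Choi entries $\chi_{(n_1m_1)(n_2m_2)}$ you compare in the product eigenbasis are exactly the numbers $\bra*{\lambda_{n_1}}\hat{\N}^{(\texttt{P})}_{\gamma}(\ketbra*{\lambda'_{m_1}}{\lambda'_{m_2}})\ket*{\lambda_{n_2}}$ and $\bra*{\lambda_{n_1}}\hat{\N}^{(\texttt{T})}_{\xi'}(\ketbra*{\lambda'_{m_1}}{\lambda'_{m_2}})\ket*{\lambda_{n_2}}$ that the paper obtains by applying both maps to the matrix units $\ketbra*{\lambda'_{m_1}}{\lambda'_{m_2}}$ and sandwiching with $\bra*{\lambda_{n_1}}(\cdot)\ket*{\lambda_{n_2}}$, after which both arguments clear the factor $1/\sqrt{r'_{m_1}r'_{m_2}}$ (licensed by full-rank $\gamma'$) and conjugate in the same way to reach Eq.~\eqref{eq:iff}. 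One harmless aside in your write-up is off: since the Choi matrix is independent of which Kraus decomposition is used to compute it, no ``unitary gauge freedom'' would ever need accounting; the shared index $(j,k)$ is only a notational convenience that lets the difference of the two Choi matrices be written as a single sum.
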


\begin{proof}
	 The condition of TTR holds, i.e., $\hat{\mathcal{N}}^{(\texttt{P})}_{\gamma}=\hat{\mathcal{N}}^{(\texttt{T})}_{\xi'}$ iff 
	\begin{align*}
		\begin{aligned}
			\hat{\mathcal{N}}^{(\texttt{P})}_{\gamma}(\ketbra*{\lambda'_{m_1}}{\lambda'_{m_2}})=&\hat{\mathcal{N}}^{(\texttt{T})}_{\xi'}(\ketbra*{\lambda'_{m_1}}{\lambda'_{m_2}}) \Leftrightarrow \\
			\sum_{jk}\hat{N}_{jk}^{(\texttt{P})}(\ketbra*{\lambda'_{m_1}}{\lambda'_{m_2}})\hat{N}_{jk}^{(\texttt{P})\dagger}=&\sum_{jk}\hat{N}_{jk}^{(\texttt{T})}(\ketbra*{\lambda'_{m_1}}{\lambda'_{m_2}})\hat{N}_{jk}^{(\texttt{T})\dagger}
		\end{aligned}
	\end{align*}
	for any $m_1,m_2$ because $\{\ket*{\lambda'_{m}}\}_m$ spans the Hilbert space of the image of the channel $\N$. Together with the 
	Kraus representations as in Eq.~\eqref{eq:kraus},
	multiplying this with $\bra*{\lambda_{n_1}}(\cdot)\ket*{\lambda_{n_2}}$, 
	it becomes obvious that Eq.~\eqref{eq:iff} holds iff $\N$ is TTR: for any $m_1,m_2,n_1,n_2$,
	\begin{align*}
    \begin{aligned}
        & \sum_{jk} \left(p_k\frac{\sqrt{r_{n_1}r_{n_2}}}{\sqrt{r'_{m_1}r'_{m_2}}}-p'_j\right)\Phi^*_{m_1 j\leftarrow n_1k}\Phi_{m_2 j\leftarrow n_2k}=0\\
		\Leftrightarrow& \sum_{jk} \left(p_k\sqrt{r_{n_1}r_{n_2}}-p'_j\sqrt{r'_{m_1}r'_{m_2}}\right)\Phi_{m_1 j\leftarrow n_1k}\Phi^*_{m_2 j\leftarrow n_2k} = 0.
    \end{aligned}
	\end{align*} 
\end{proof}

It is instructive to look at the TTR condition in the basis of the eigenstates of $\gamma'$, and draw a comparison with Tasaki's two-measurement setup \cite{tasaki00jarzynski,buscemi2021fluctuation}. Denoting $\Pi_{n}\equiv \op*{\lambda_{n}} \qand \Pi'_{m}\equiv \op*{\lambda'_{m}}$, we get 
\begin{align}
	\begin{aligned}
		&\hat{\mathcal{N}}^{(\texttt{P})}_{\gamma}(\Pi'_{m})=\hat{\mathcal{N}}^{(\texttt{T})}_{\xi'}(\Pi'_{m})\Leftrightarrow\\
		& \sum_{jk} p_k	\sqrt{\gamma}\bra*{e_k}_{\texttt{E}}U^\dagger_{\texttt{SE}}\ket*{e'_j}_{\texttt{E}}\frac{1}{\sqrt{\gamma'}}  \Pi'_{m} 	\frac{1}{\sqrt{\gamma'}} \bra*{e'_j}_{\texttt{E}}U^\dagger_{\texttt{SE}}\ket*{e_k}_{\texttt{E}} \sqrt{\gamma}
		\\
		&\, =\sum_{jk} p'_j \bra*{e_k}_{\texttt{E}}U^\dagger_{\texttt{SE}}\ket*{e'_j}_{\texttt{E}}  \Pi'_{m} \bra*{e'_j}_{\texttt{E}}U^\dagger_{\texttt{SE}}\ket*{e_k}_{\texttt{E}} \Leftrightarrow\\
		& \, r_n \Tr\left( \N (\Pi_{n} ) \Pi'_{m}  \right) = r'_m  \Tr\left( \hat{\N}^{(\texttt{T})}_{\xi'} ( \Pi'_{m} ) \Pi_{n}  \right).
	\end{aligned}\label{eq:bayes}
\end{align}

Notice that $\Tr\left( \N (\Pi_{n} ) \Pi'_{m}  \right)\equiv \Pr(\ket*{\lambda_n} \stackrel{\N}{\rightarrow} \ket*{\lambda'_m} )$ represents the probability of going from $\ket*{\lambda_n}$ to $\ket*{\lambda'_m}$ via $\mathcal{N}$. Similarly, $\Tr\left(\hat{\N}^{(\texttt{T})}_{\xi'} ( \Pi'_{m} ) \Pi_{n}  \right)\equiv\Pr\Big(\ket*{\lambda'_m} \stackrel{\hat{\mathcal{N}}^{(\texttt{T})}_{\xi'}}{\rightarrow} \ket*{\lambda_n} \Big)$ represents the probability of the reverse via $\hat{\mathcal{N}}^{(\texttt{T})}_{\xi'}$. Thus Eq.~\eqref{eq:bayes} reads $p(n)p(m|n)=p(m)p(n|m)$, showing that $\hat{\mathcal{N}}^{(\texttt{T})}_{\xi'}$ implements classical Bayesian retrodiction on the two-measurement scheme. In the special case where $\hat{\mathcal{N}}^{(\texttt{T})}_{\xi'}=\N$, Eq.~\eqref{eq:bayes} can be further seen as a detailed balance between $\gamma$ and $\gamma'$~\cite{thomsen1953principle}. Indeed, if $\N$ is TTR with regard to $\xi'$ for $\gamma$, this becomes the so-called Kubo-Martin-Schwinger (KMS) detailed balance~\cite{scandi2025detailed}.

Next, we demonstrate the use cases of \cref{rst:exactTTR} with three examples. In all of them, the prior state is chosen to be a steady state.

\begin{exm}[Factorizable channels]
    Consider a $d$-dimensional unital channel $\mathcal{N}$ defined by $(U,\xi)$ with an ancilla $\xi=\frac{\one}{d}$ for some unitary $U$. $\mathcal{N}$ is tabletop time-reversible with regard to $\xi'=\xi$ when $\gamma$ is a steady state. Here $\one$ denotes the identity operator.
\end{exm}

If $\gamma=\frac{\one}{d}$, TTR holds because of the trivial product preservation $U(\one\otimes\one)U^\dagger=\one\otimes\one$. We proceed to prove that TTR holds when $\gamma$ is any steady state. To verify this, we firstly observe that $\hat{\mathcal{N}}^{(\texttt{P})}_{\gamma}=\mathcal{N}^\dagger$:
\begin{align}
    \begin{aligned}
        \hat{\mathcal{N}}^{(\texttt{P})}_{\gamma}(\bullet) &\equiv\sqrt{\gamma}\sum_{jk} N_{jk}^\dagger \qty(\frac{1}{\sqrt{\N(\gamma)}}
    \bullet\frac{1}{\sqrt{\N(\gamma)}}
    )N_{jk}\sqrt{\gamma}\\
    &=\sum_{jk}\sqrt{\gamma} N_{jk}^\dagger \qty(\frac{1}{\sqrt{\gamma}}
    \bullet\frac{1}{\sqrt{\gamma}}
    )N_{jk}\sqrt{\gamma}\\
    &=\sum_{jk}\sqrt{\gamma} \frac{1}{\sqrt{\gamma}} N_{jk}^\dagger \qty(
    \bullet
    )N_{jk}\frac{1}{\sqrt{\gamma}}\sqrt{\gamma} = \mathcal{N}^\dagger(\bullet).
    \end{aligned}
\end{align}
The second equality follows because $\gamma$ is a steady state, and the third equality follows from Theorem 4.25 in Ref.~\cite{watrous2018qit}. The theorem states that when a channel is unital, a state $\gamma$ commutes with every Kraus operator $N_{jk}$ of the channel (and thus with their adjoint $N_{jk}^\dagger$) if and only if $\gamma$ is a steady state. Secondly, we complete the verification by noting $N_{jk}^\dagger=N_{jk}^{(\texttt{T})}$ where we denote by $N_{jk}$ Kraus operators of $\mathcal{N}$ and denote by $N_{jk}^{(\texttt{T})}$ those of the tabletop reverse map $\hat{\mathcal{N}}^{(\texttt{T})}_{\xi'}$ with regard to $\xi'=\frac{\one}{d}$ . Indeed, they are given by $N_{jk}=\frac{1}{d}\bra*{j}_{\texttt{E}} U\ket*{k}_{\texttt{E}}$ and $N_{jk}^{(\texttt{T})}=\frac{1}{d}\bra*{k}_{\texttt{E}} U^\dagger \ket*{j}_{\texttt{E}}$ for some orthonormal basis $\{\ket{j}\}$. As a side note, it is almost immediate from the fact $\hat{\mathcal{N}}^{(\texttt{P})}_{\gamma}=\mathcal{N}^\dagger$ that a Petz recovery map of a unital channel for a steady state is the optimal recovery, according to entanglement fidelity~\cite{barnum2002reversing}.

In this special case of unital channels, we did not need \cref{rst:exactTTR} to verify tabletop reversibility. We show \cref{rst:exactTTR} can be utilized better in the following examples where a channel under consideration is not as trivial as unital channels and the system and the bath are qubits.

\begin{exm}[Controlled-$X$ gates]
    Let $X,Z$ be Pauli operators. Consider a dynamics $\mathcal{N}$ defined by $U=\op{0}\ten\one+\op{1}\ten X$ and an ancilla $\xi$ such that $\comm{\xi}{Z}=0$.  $\mathcal{N}$ is TTR with $\xi'$ such that $\comm{\xi'}{\xi}=0$, if a prior state $\gamma$ satisfies $\comm{\gamma}{Z}=0$ such that $\gamma$ is a steady state of $\mathcal{N}$.
\end{exm}
This example appeared in Theorem 4 in Ref.~\cite{aw2024role} as an example of non-product preserving but tabletop time-reversible dynamics. The tabletop reversibility of $\mathcal{N}$ can be verified explicitly by using \cref{rst:exactTTR}. Denoting $\xi'\stackrel{s.d.}{=}\sum_j p'_j \op*{e'_j}$ a spectral decomposition of $\xi'$, we get
\begin{align}
\begin{aligned}
    \Phi_{mj\leftarrow nk}\equiv& \bra{m}_{\texttt{S}}\ten\bra*{e'_j}_{\texttt{E}}U\ket{n}_{\texttt{S}}\ten\ket{k}_{\texttt{E}} \\
    =& \left(\bra{m}_{\texttt{S}}\ten\bra*{e'_j}_{\texttt{E}}\right)\Big(\ket{n}_{\texttt{S}}\ten X^n \ket{k}_{\texttt{E}}\Big)\\
    =& \delta_{m,n} \braket*{e'_j}{k\oplus n},
\end{aligned}
\end{align}
where $\delta_{m,n}$ represents the Kronecker delta and $k\oplus n$ represents $k+n \mod 2$. By setting $\ket*{e'_j}=\ket{j}$, we get $\Phi_{mj\leftarrow nk} =\delta_{m,n}\delta_{j,k\oplus n}$. Then it is not difficult to see that the TTR condition \eqref{eq:iff} can be satisfied: for any $m_1,m_2,n_1,n_2$, 
\begin{align*}
    &\sum_{jk} \left(p'_j\sqrt{r'_{m_1}r'_{m_2}}-p_k\sqrt{r_{n_1}r_{n_2}}\right)\Phi_{m_1 j\leftarrow n_1k}\Phi^*_{m_2 j\leftarrow n_2k} \\
    =& \sum_{jk} \left(p'_j\sqrt{r'_{m_1}r'_{m_2}}-p_k\sqrt{r_{n_1}r_{n_2}}\right)\delta_{m_1,n_1}\delta_{j,k\oplus n_1} \delta_{m_2,n_2}\delta_{j,k\oplus n_2}\\
    =& \sum_k r_{n_1} (p'_{k\oplus n_1}-p_k)= r_{n_1}\Big( \sum_k p'_{k\oplus n_1}-\sum_k p_k\Big) =0.
\end{align*}

\begin{exm}[$XX$ Hamiltonian]
    Consider the dynamics defined by $H_{\texttt{tot}}=gX\ten X$, or $gXX$ for short-handed notation, and an arbitrary ancilla $\xi$. The family of maps 
    \begin{align}
        \N_\theta(\bullet) = \Tr_{\texttt{E}}\left(e^{-iXX \theta}(\bullet\ten\xi)e^{iXX \theta}\right)
    \end{align} describing the dynamics of the system after having interacted with the ancilla for a time $\Delta t$ are TTR for every $\theta=g\Delta t$, if the prior state $\gamma$ satisfies $\comm{\gamma}{X}= 0$. In general, $\xi'$ depends on $\theta$; if $\comm{\xi}{X}= 0$ (thermal channel) or $\comm{\xi}{Z}=0$ (dephasing channel), then $\xi'=\xi$ for all $\theta$.
    \label{ex:XX}
\end{exm}

The dynamics under study is
\begin{eqnarray}
\begin{aligned}
    &e^{-iXX \theta}(\gamma \ten \xi) e^{iXX \theta} \\
    =& (\cos{\theta}\one-i\sin{\theta}XX)(\gamma \ten \xi)(\cos{\theta}\one+i\sin{\theta}XX)\\
    =& \cos^2{\theta} (\gamma \ten \xi)  + \sin^2{\theta} (\gamma \ten X \xi X)\\
    &+ i(\cos{\theta}\sin{\theta}) X \gamma\ten \comm{X}{\xi}, 
\end{aligned}
\end{eqnarray}   
where the condition $\comm{\gamma}{X}= 0$ was used in the last line. Three facts follow from this: First, any $\gamma$ that commutes with $X$ is a steady state of $\N_\theta$ for all $\theta$. Second, $U_\theta=e^{-iXX \theta}$ preserves the product $\gamma \ten \xi$ whenever $\theta=\frac{N}{2}\pi$ for any integer $N$. Third, if $\comm{\xi}{X}= 0$, $U_\theta=e^{-iXX \theta}$ preserves the product $\gamma \ten \xi$ for any $\theta$, as expected from a thermal operation.

Let us look next at the case $\comm{\xi}{Z}=0$, which defines dephasing channels $\N_\theta(\rho) = \cos^2\theta \rho + \sin^2\theta X \rho X$. We first compute the transition matrix $\Phi_{mj\leftarrow nk}$ for $\xi'=\xi$:
\begin{align}
\begin{aligned}
    &\Phi_{mj\leftarrow nk}\\
    \equiv& 
    \bra{\lambda_m}_{\texttt{S}}\bra*{e'_j}_{\texttt{E}} U_{\theta} \ket*{\lambda_n}_{\texttt{S}}\ket*{e_k}_{\texttt{E}} \\
    =& \bra{\lambda_m}_{\texttt{S}}\bra*{j}_{\texttt{E}}(\cos{\theta}\one-i\sin{\theta}XX) \ket*{\lambda_n}_{\texttt{S}}\ket*{k}_{\texttt{E}}\\
    =& \delta_{m,n}\left( \cos\theta \delta_{j,k} -i \sin\theta \delta_{j,k\oplus 1} \right).
\end{aligned}
\end{align}
Having this, the TTR condition is given by 
\begin{align}
    &\sum_{jk} \left(p'_j\sqrt{r'_{m_1}r'_{m_2}}-p_k\sqrt{r_{n_1}r_{n_2}}\right)\Phi_{m_1 j\leftarrow n_1k}\Phi^*_{m_2 j\leftarrow n_2k}\nonumber\\
    &= \sum_{jk} r_{n_1} \left(p_j-p_k\right) \Big( (\cos^2\theta) \delta_{j,k} + (\sin^2\theta) \delta_{j,k\oplus 1} \Big)\nonumber\\
    &= r_{n_1} \cos^2\theta \Big(\sum_{k} p_{k}-\sum_{k} p_k\Big) +
    r_{n_1} \sin^2\theta  \Big(\sum_{k} p_{k\oplus 1}-\sum_{k} p_k\Big) \nonumber\\
    &= 0.
\end{align}

Finally, we show that $\N_\theta(\bullet)$ is TTR for all $\theta$ whatever the choice of $\xi$, although in general $\xi'$ depends on $\theta$ as $\xi'=u_\theta \xi u_\theta^\dagger$ with $u_\theta=e^{-i\theta X}=\cos\theta \one-i\sin\theta X$. Notice that this implies $\Tr\left[\xi'X\right]=\Tr\left[\xi X\right]$.

The general proof is given in \cref{app:example}; here we show the case $\theta=\frac{\pi}{4}$. First, let us compute the transition matrix $\Phi_{mj\leftarrow nk}$:
\begin{align}
\begin{aligned}
    &\Phi_{mj\leftarrow nk}\\
    \equiv& 
    \bra{\lambda_m}_{\texttt{S}}\bra*{e'_j}_{\texttt{E}} U_{\theta} \ket*{\lambda_n}_{\texttt{S}}\ket*{e_k}_{\texttt{E}} \\
    =& \bra{\lambda_m}_{\texttt{S}}\bra*{e'_j}_{\texttt{E}}(\cos{\theta}\one-i\sin{\theta}XX) \ket*{\lambda_n}_{\texttt{S}}\ket*{e_k}_{\texttt{E}}\\
    =& \bra{\lambda_m}\ket*{\lambda_n}\bra*{e'_j}\Big(\underbrace{\frac{1}{\sqrt{2}}\ket*{e_k}+i\frac{(-1)^{n+1}}{\sqrt{2}} X\ket*{e_{k}}}_{=:\ket*{f_{n,k}}}\Big).
\end{aligned}
\end{align}
Let us then choose $\xi'$ such that $\Tr\left[\xi'X\right]=\Tr\left[\xi X\right]$ and its eigenbasis is given by $\ket*{e'_{j}}:=\frac{1}{\sqrt{2}}\left(\ket*{e_j}- i X\ket*{e_{j}}\right)$. Then we compute $\bra*{e_j'}\ket*{f_{n,k}}$ as follows,
\begin{align}
    \bra*{e_j'}\ket*{f_{n,k}} = 
    \begin{cases}
        \delta_{j,k}, \quad &\text{if} \,\, n=0\\
        i\bra*{e_j}X\ket*{e_k}, \quad &\text{if} \,\,n=1.
    \end{cases} 
\end{align}
Using $\theta=\frac{1}{4}\pi$, this leads to the transition matrix given by 
\begin{align}
    \Phi_{mj\leftarrow nk} = 
    \begin{cases}
        \delta_{m,n}\delta_{j,k}, \quad &\text{if} \,\, n=0\\
        i\delta_{m,n}\bra*{e_j}X\ket*{e_k}, \quad &\text{if} \,\,n=1.
    \end{cases} 
\end{align}
Finally, we verify the TTR condition in each of the four cases $(n_1,n_2)\in\{0,1\}^2$:

\noindent (a) when $n_1=n_2=0$, the TTR condition is given by 
\begin{align}
    \begin{aligned}
        &\sum_{jk} \left(p'_j\sqrt{r'_{m_1}r'_{m_2}}-p_k\sqrt{r_{n_1}r_{n_2}}\right)\Phi_{m_1 j\leftarrow n_1k}\Phi^*_{m_2 j\leftarrow n_2k}\\
        =& r_{n_1} \sum_{jk} \left(p'_j-p_k\right)\delta_{j,k}=0.
    \end{aligned}
\end{align}
(b) when $n_1=n_2=1$, we have 
\begin{align}
    \begin{aligned}
        &\sum_{jk} \left(p'_j\sqrt{r'_{m_1}r'_{m_2}}-p_k\sqrt{r_{n_1}r_{n_2}}\right)\Phi_{m_1 j\leftarrow n_1k}\Phi^*_{m_2 j\leftarrow n_2k}\\
        =& r_{n_1} \sum_{jk}\left(p'_j-p_k\right)\bra*{e_j}X\ket*{e_k}\bra*{e_k}X\ket*{e_j}\\
        =& r_{n_1} \Big(\sum_{j} p'_j -
        \sum_{k} p_k \Big)
        =0.
    \end{aligned}
\end{align}
(c) when $n_1=1,n_2=0$, we have 
\begin{align}
    \begin{aligned}
        &\sum_{jk} \left(p'_j\sqrt{r'_{m_1}r'_{m_2}}-p_k\sqrt{r_{n_1}r_{n_2}}\right)\Phi_{m_1 j\leftarrow n_1k}\Phi^*_{m_2 j\leftarrow n_2k}\\
        =& i\sqrt{r_{n_1} r_{n_2}} \sum_{jk} \left(p'_j-p_k\right) \delta_{j,k} \bra*{e_j}X\ket*{e_k}\\
        =& i\sqrt{r_{n_1} r_{n_2}} \Big(\sum_{j} p'_j  \bra*{e_j}X\ket*{e_j} -
        \sum_{k} p_k  \bra*{e_k}X\ket*{e_k}\Big)\\
        =& i\sqrt{r_{n_1} r_{n_2}}\left(\Tr\left[\xi'X\right]-\Tr\left[\xi X\right]\right)=0.
    \end{aligned}
\end{align}
(d) when $n_1=1,n_2=0$, we arrive at the same conclusion as in the case (c).

\section{Approximate TTR}

\begin{figure}[ht]
    \centering
    \includegraphics[width=0.72\linewidth]{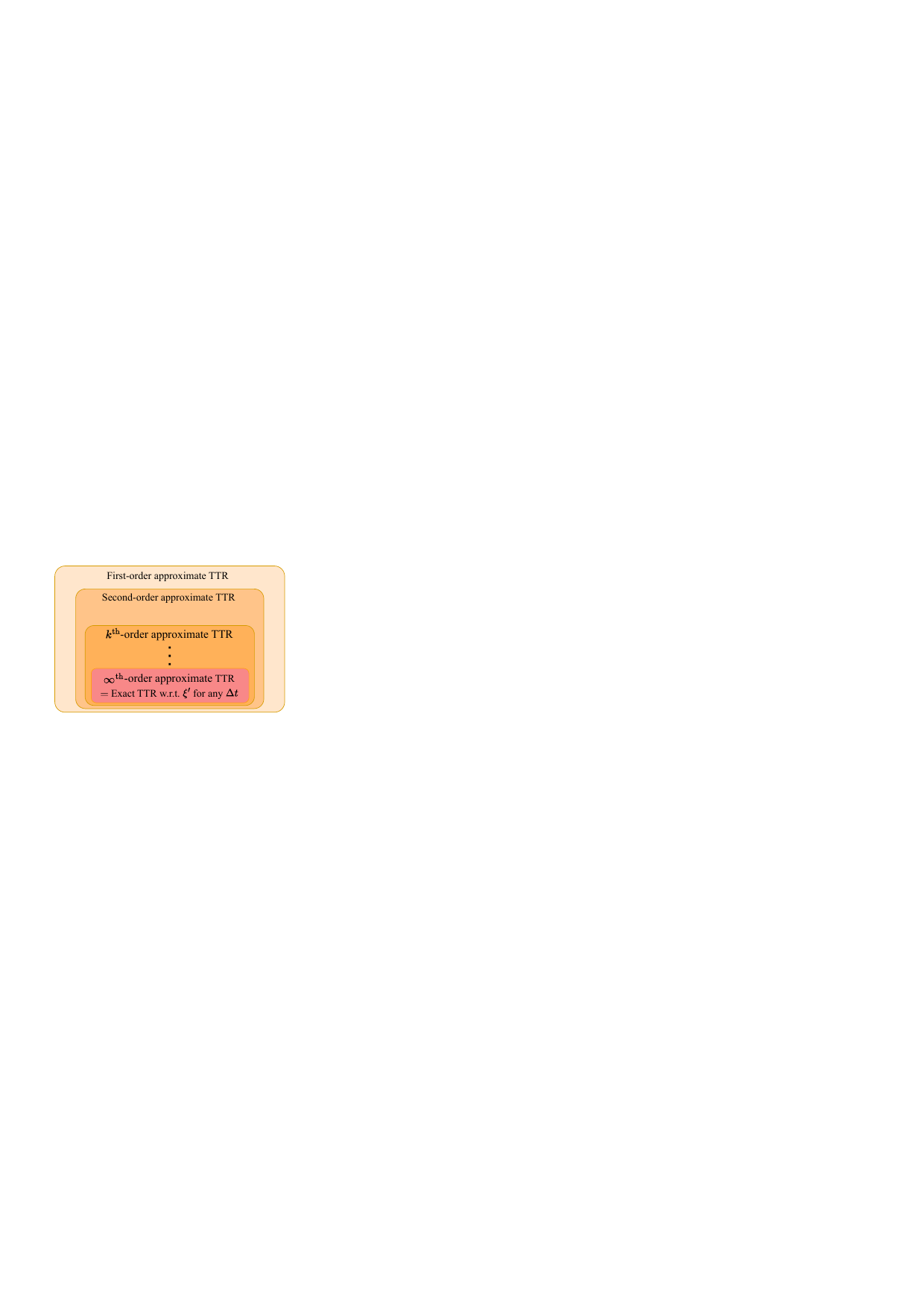}
    \caption{{\bf Approximate Tabletop Time-reversibility.}}
    \label{fig:attr}
\end{figure}

So far, TTR has been defined for a given channel. Example \ref{ex:XX} showed that there are cases in which TTR holds for a whole family of channels indexed by the interaction time, but usually at the price of having to choose $\xi'$ differently for each time. We now switch the focus and search for situations in which TTR holds for $\xi$ independent of the interaction time, at least approximately.

We first define the first- and second-order approximations. Then use this knowledge to discuss TTR for a random-time collision model, and finish with the application to the amplitude-damping channel for a qubit.

Consider a dynamics $\N_{\Delta t}$ for a short time $\Delta t$. A direct approach to get approximate conditions of TTR is to expand $\hat{{\cal N}}^{(\texttt{P})}_{\Delta t,\gamma}$ and $\hat{{\cal N}}^{(\texttt{T})}_{\Delta t,\xi'}$ in powers of $\Delta t$ up to some desired order and match terms order by order (sketched in \cref{fig:attr}): 
\begin{dfn}
    Given $(H_{\texttt{tot}},\xi)$ that define the channel ${\cal N}_{\Delta t}$ for any evolution time $\Delta t$, we consider the expansions of $\hat{{\cal N}}^{(\texttt{P})}_{\Delta t,\gamma}$ and $\hat{{\cal N}}^{(\texttt{T})}_{\Delta t,\xi'}$ as a power series in $\Delta t$. We say that $(H_{\texttt{tot}},\xi)$ is \emph{$k^{\texttt{th}}$ order approximate tabletop time-reversible for a prior $\gamma$ with regard to $\xi'$} if the $n^{\texttt{th}}$ order terms of the two series are the same for all $n\le k$. 
\end{dfn}

As desired, the condition of approximate TTR does not depend on $\Delta t$; neither will $\xi'$. When $k=0$, the condition is trivial. The limit $k \to \infty$ corresponds to the case of exact TTR with regard to \emph{the same} $\xi'$ for any $\Delta t$. This is a strong condition, but we know that thermal operations satisfy it; we leave the question open as to whether there exist non-thermal operations that are $\infty^{\texttt{th}}$-order approximate TTR.

In what follows, we will present first two non-trivial approximation of TTR conditions. For convenience $X_{\texttt{S}}$ and $X_{\texttt{E}}$ will be used as shorthand notations for $X_{\texttt{S}} \ten \one_{\texttt{E}}$ and $\one_{\texttt{S}} \ten X_{\texttt{E}}$, respectively. Calligraphic fonts are used only for channels and superoperators.
We will also adopt the natural unit $\hbar:=1$, and often suppress the subscript $t$ from states when it is clear that we are concerned with a single dynamics. 

\subsection{First-order approximation}

Let us begin with the first-order approximation of the TTR condition:

\begin{thm}[First-order approximation]
\label{rst:firstTTR}
$(H_{\texttt{tot}},\xi)$ is first-order approximate tabletop time-reversible for a prior $\gamma$ with regard to $\xi'$ if it holds that 
\begin{align}
\begin{aligned}
    &\Tr_{\texttt{E}} \left( \left( \gamma_{\texttt{S}}^{\frac{1}{2}} H_{\texttt{tot}} \gamma_{\texttt{S}}^{-\frac{1}{2}}\right) \xi_{\texttt{E}} \right) - iA\gamma^{-\frac{1}{2}} = \Tr_{\texttt{E}} \left( H_{\texttt{tot}} \xi'_{\texttt{E}} \right)\\
    &\Tr_{\texttt{E}} \left( \left( \gamma_{\texttt{S}}^{-\frac{1}{2}} H_{\texttt{tot}} \gamma_{\texttt{S}}^{\frac{1}{2}}\right)  \xi_{\texttt{E}} \right) + i\gamma^{-\frac{1}{2}}A = \Tr_{\texttt{E}} \left( H_{\texttt{tot}} \xi'_{\texttt{E}} \right),
\end{aligned}
\label{eq:firstTTR}
\end{align}
where we define $A \equiv -i\int_0^\infty e^{-x \sqrt{\gamma}} \comm{H}{\gamma} e^{-x \sqrt{\gamma}} dx $ and $H\equiv \Tr_{\texttt{E}} \left( H_{\texttt{tot}}\xi_{\texttt{E}} \right)$. When $\gamma$ is a steady state or the maximally mixed state, $A$ vanishes.
\end{thm}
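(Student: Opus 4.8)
The plan is to expand both $\hat{\N}^{(\texttt{P})}_{\Delta t,\gamma}$ and $\hat{\N}^{(\texttt{T})}_{\Delta t,\xi'}$ as power series in $\Delta t$ using $U_{\Delta t}=e^{-iH_{\texttt{tot}}\Delta t}$, read off the first-order (linear in $\Delta t$) superoperators, and impose their equality. First I would expand the forward dynamics: to first order $U_{\Delta t}(\rho\ten\xi_{\texttt{E}})U_{\Delta t}^\dagger=\rho\ten\xi_{\texttt{E}}-i\Delta t\comm{H_{\texttt{tot}}}{\rho\ten\xi_{\texttt{E}}}+O(\Delta t^2)$, and tracing out \texttt{E} collapses the commutator to $\N_{\Delta t}(\rho)=\rho-i\Delta t\comm{H}{\rho}+O(\Delta t^2)$ with $H\equiv\Tr_{\texttt{E}}(H_{\texttt{tot}}\xi_{\texttt{E}})$, so that the reference evolves as $\gamma'=\N_{\Delta t}(\gamma)=\gamma-i\Delta t\comm{H}{\gamma}+O(\Delta t^2)$. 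The same manipulation gives $\N^\dagger(\sigma)=\sigma+i\Delta t\comm{H}{\sigma}+O(\Delta t^2)$ for the adjoint and $\hat{\N}^{(\texttt{T})}_{\Delta t,\xi'}(\rho)=\rho+i\Delta t\comm{H'}{\rho}+O(\Delta t^2)$ for the tabletop reverse map, with $H'\equiv\Tr_{\texttt{E}}(H_{\texttt{tot}}\xi'_{\texttt{E}})$. Since both maps agree at zeroth order (the identity), first-order approximate TTR amounts to matching the $O(\Delta t)$ terms; these expansions are routine, and the only nontrivial ingredient appears next.

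The crux is the first-order expansion of $1/\sqrt{\gamma'}$, where $\gamma'$ carries its own $\Delta t$-dependence and generally does not commute with its perturbation $\dot\gamma\equiv-i\comm{H}{\gamma}$. Differentiating $(\sqrt{\gamma'})^2=\gamma'$ at $\Delta t=0$ shows that the first-order correction $A$ to $\sqrt{\gamma'}$ solves the Sylvester equation $A\sqrt{\gamma}+\sqrt{\gamma}A=\dot\gamma$, whose solution admits the integral representation $A=\int_0^\infty e^{-x\sqrt{\gamma}}\dot\gamma\,e^{-x\sqrt{\gamma}}\,dx$; substituting $\dot\gamma=-i\comm{H}{\gamma}$ reproduces exactly the $A$ of the statement. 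I would verify this representation by checking that $\sqrt{\gamma}A+A\sqrt{\gamma}=-\big[e^{-x\sqrt{\gamma}}\dot\gamma\,e^{-x\sqrt{\gamma}}\big]_0^\infty=\dot\gamma$, using $\gamma>0$ to kill the boundary term at infinity. A standard inverse expansion then gives $1/\sqrt{\gamma'}=\gamma^{-1/2}-\Delta t\,\gamma^{-1/2}A\gamma^{-1/2}+O(\Delta t^2)$.

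With these pieces I would assemble the linear-in-$\Delta t$ part of $\hat{\N}^{(\texttt{P})}_{\Delta t,\gamma}(\rho)=\sqrt{\gamma}\,\N^\dagger\!\big(\tfrac{1}{\sqrt{\gamma'}}\rho\,\tfrac{1}{\sqrt{\gamma'}}\big)\sqrt{\gamma}$ by the product rule: one contribution from $\N^\dagger$ acting on the zeroth-order argument $\gamma^{-1/2}\rho\gamma^{-1/2}$, and two from perturbing the $1/\sqrt{\gamma'}$ factors. Using $\sqrt{\gamma}\,\gamma^{-1/2}=\one$ to collapse the outer factors, this term takes the form $\rho\mapsto L\rho-\rho R$ with $L,R$ built from $\sqrt{\gamma}H\gamma^{-1/2}$, $\gamma^{-1/2}H\sqrt{\gamma}$, $A\gamma^{-1/2}$ and $\gamma^{-1/2}A$. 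Equating it with the tabletop first-order term $iH'\rho-i\rho H'$ for all $\rho$ and separating the left- and right-multiplication pieces yields the two identities in \eqref{eq:firstTTR}.

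Finally I would record two points. Since the Petz map is trace preserving, its first-order generator must be traceless, which through the Sylvester relation for $A$ forces $L=R$; hence the two identities in \eqref{eq:firstTTR} are equivalent and either may be used, providing a useful cross-check of the algebra. The vanishing of $A$ is then immediate: if $\gamma$ is a steady state then $\dot\gamma=-i\comm{H}{\gamma}=0$, and if $\gamma=\one/d$ then $\comm{H}{\gamma}=0$ trivially, so in either case the integrand vanishes and $A=0$. I expect the main obstacle to be the matrix-square-root derivative and the careful bookkeeping of the non-commuting factors $\gamma$, $H$ and $A$ when collapsing the $\sqrt{\gamma}$-sandwich; the remaining steps are a mechanical order-by-order expansion.
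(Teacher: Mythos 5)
Your proposal follows the same route as the paper's own proof: expand $\hat{\N}^{(\texttt{P})}_{\Delta t,\gamma}$ and $\hat{\N}^{(\texttt{T})}_{\Delta t,\xi'}$ to first order in $\Delta t$, reduce the only nontrivial step to the first-order perturbation of $(\N_{\Delta t}(\gamma))^{-1/2}$ via a Sylvester equation solved by the integral representation, and match the left- and right-multiplication operators. Your individual steps are correct and mutually consistent: the expansions of $\N_{\Delta t}$, $\N_{\Delta t}^{\dagger}$ and the tabletop map, the equation $\sqrt{\gamma}A+A\sqrt{\gamma}=\dot\gamma$ for the correction to $\sqrt{\gamma'}$, its integral solution (which reproduces the theorem's $A$), the expansion $1/\sqrt{\gamma'}=\gamma^{-1/2}-\Delta t\,\gamma^{-1/2}A\gamma^{-1/2}+\order{(\Delta t)^2}$, and the argument that $A=0$ for a steady state or the maximally mixed state.

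There is, however, one discrepancy you should be aware of: your final assertion that the matching ``yields the two identities in \eqref{eq:firstTTR}'' does not survive a faithful execution of your own (correct) conventions. Since your $A$ enters $1/\sqrt{\gamma'}$ with a \emph{minus} sign, the matching actually gives
\begin{align*}
\Tr_{\texttt{E}}\left(\left(\gamma_{\texttt{S}}^{\frac{1}{2}} H_{\texttt{tot}} \gamma_{\texttt{S}}^{-\frac{1}{2}}\right)\xi_{\texttt{E}}\right)+iA\gamma^{-\frac{1}{2}}
&=\Tr_{\texttt{E}}\left(H_{\texttt{tot}}\xi'_{\texttt{E}}\right),\\
\Tr_{\texttt{E}}\left(\left(\gamma_{\texttt{S}}^{-\frac{1}{2}} H_{\texttt{tot}} \gamma_{\texttt{S}}^{\frac{1}{2}}\right)\xi_{\texttt{E}}\right)-i\gamma^{-\frac{1}{2}}A
&=\Tr_{\texttt{E}}\left(H_{\texttt{tot}}\xi'_{\texttt{E}}\right),
\end{align*}
i.e.\ \eqref{eq:firstTTR} with the signs of the $A$-terms flipped. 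This is not a flaw in your reasoning but a sign slip in the paper's appendix: there, matching the first order of $W^{-2}=\gamma-\Delta t\,(\gamma^{1/2}A+A\gamma^{1/2})+\order{(\Delta t)^2}$ against $\N_{\Delta t}(\gamma)=\gamma-i\Delta t\comm{H}{\gamma}+\order{(\Delta t)^2}$ yields $\gamma^{1/2}A+A\gamma^{1/2}=+i\comm{H}{\gamma}$, not $-i\comm{H}{\gamma}$ as written, and the printed theorem inherits this flip. Your trace-preservation cross-check is precisely what decides between the two versions: using $A\gamma^{-1/2}+\gamma^{-1/2}A=\gamma^{-1/2}(\sqrt{\gamma}A+A\sqrt{\gamma})\gamma^{-1/2}=-i\gamma^{-1/2}\comm{H}{\gamma}\gamma^{-1/2}$, the left-hand sides of the two sign-corrected conditions above are automatically Hermitian and equal, so the two lines are equivalent, as you claim; with the printed signs, the left-hand side of the first line of \eqref{eq:firstTTR} is Hermitian only when $\comm{H}{\gamma}=0$, so the printed conditions can only ever be satisfied in the degenerate situation where $A$ vanishes anyway. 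In short: same method as the paper, essentially correct execution, and your cross-check in fact exposes (and corrects) a sign error in the stated theorem; the only inaccuracy on your side is the claim that your derivation reproduces \eqref{eq:firstTTR} verbatim.
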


\begin{proof}
    Here, we will give a rough sketch of proof when $\gamma$ is a steady state (see \cref{app:first_order} for the complete proof).
    
    We start with the Kraus representation of $\N_{\Delta t}$:
    \begin{align*}
        \N_{\Delta t}(\rho) = \sum_{j,k} p_k \bra*{e_j}_{\texttt{E}} e^{-i H_{\texttt{tot}}\Delta t}\ket*{e_k}_{\texttt{E}}  \rho  \bra*{e_k}_{\texttt{E}} e^{i H_{\texttt{tot}}\Delta t}\ket*{e_j}_{\texttt{E}}. 
    \end{align*}
    Accordingly, the Petz recovery map for prior being a steady state can be expressed as 
    \begin{align}
        \!\!&\hat{{\cal N}}^{(\texttt{P})}_{\Delta t,\gamma}(\rho)\nonumber\\
        \!\!=&\sum_{j,k} p_k \gamma^{\frac{1}{2}} \!\bra*{e_k}_{\texttt{E}} e^{i H_{\texttt{tot}}\Delta t}\ket*{e_j}_{\texttt{E}} \!\gamma^{-\frac{1}{2}} \rho \gamma^{-\frac{1}{2}} \!\bra*{e_j}_{\texttt{E}} e^{-i H_{\texttt{tot}}\Delta t}\ket*{e_k}_{\texttt{E}} \! \gamma^{\frac{1}{2}}\nonumber\\
        \!\!=&\sum_{j,k} p_k \gamma^{\frac{1}{2}} \!\bra*{e_k}_{\texttt{E}}(\one+i\Delta t H_{\texttt{tot}})\ket*{e_j}_{\texttt{E}} \gamma^{-\frac{1}{2}} \rho \gamma^{-\frac{1}{2}} (\one-i\Delta t H_{\texttt{tot}}) \ket*{e_k}_{\texttt{E}} \! \gamma^{\frac{1}{2}}\nonumber\\
        =&\rho+i\Delta t\Tr_{\texttt{E}} \left( (\gamma^{\frac{1}{2}}\ten \one ) H_{\texttt{tot}} ( \gamma^{-\frac{1}{2}} \ten \xi) \right)\rho \nonumber\\
        &-i\Delta t\rho\Tr_{\texttt{E}} \left( (\gamma^{-\frac{1}{2}}\ten \one ) H_{\texttt{tot}} ( \gamma^{\frac{1}{2}} \ten \xi)\right)+\order{(\Delta t)^2}. 
    \end{align}
    Similarly, we get 
    \begin{align}
    \begin{aligned}
        &\hat{\N}_{\Delta t,\xi'}^{(\texttt{T})}(\rho) \\
        =& \sum_{j,k} p'_j  \bra*{e_k'}_{\texttt{E}} e^{i H_{\texttt{tot}}\Delta t}\ket*{e_j'}_{\texttt{E}} \rho  \bra*{e_j'}_{\texttt{E}} e^{-i H_{\texttt{tot}}\Delta t}\ket*{e_k'}_{\texttt{E}}\\
        =& \rho + i\Delta{t}\sum_{j} p'_j  \bra*{e_j'}_{\texttt{E}}  H_{\texttt{tot}} \ket*{e_j'}_{\texttt{E}} \rho \nonumber\\
        &-i\Delta{t}\rho\sum_{j} p'_j  \bra*{e_j'}_{\texttt{E}} H_{\texttt{tot}} \ket*{e_j'}_{\texttt{E}}  + \order{(\Delta t)^2}\\
        =&\rho+i(\Delta t)\Tr_{\texttt{E}} \left( H_{\texttt{tot}} ( \one \otimes \xi') \right)\rho\\ 
        &-i(\Delta t)\rho \Tr_{\texttt{E}} \left( H_{\texttt{tot}} (\one \otimes \xi') \right) + \order{(\Delta t)^2}.
    \end{aligned}
    \end{align}
\end{proof}

From \cref{rst:firstTTR}, we can make some useful observations. For the case where $\gamma$ is the maximally mixed state, conditions \eqref{eq:firstTTR} reduce to 
\begin{align}
    \Tr_{\texttt{E}} \left( H_{\texttt{tot}} \xi_{\texttt{E}} \right) = \Tr_{\texttt{E}} \left( H_{\texttt{tot}} \xi'_{\texttt{E}} \right).
\end{align}
In particular, any dynamics is first-order approximate TTR with regard to $\xi'=\xi$ when $\gamma=\one/d$. As such, one can approximately implement a Petz recovery map for $\gamma=\one/d$ by running $U_{\Delta t}=e^{-iH_{\texttt{tot}}\Delta t}$ backwards, i.e., $U_{(-\Delta t)}=U_{\Delta t}^\dagger$. This is similar to how thermal operations are TTR by running the same operation backwards. The same conclusion can be derived when $\gamma$ is a steady state and satisfies $[\gamma_{\texttt{S}}, H_{\texttt{tot}}] = 0$.

As an example, recall \cref{ex:XX} where we considered $H_{\texttt{tot}}=gXX$ and a steady state $\gamma$. In this case, $\gamma$ satisfies $[\gamma_{\texttt{S}}, H_{\texttt{tot}}] = 0$. The dynamics under consideration is then TTR for any evolution time, but $\xi'$ depends on an evolution time unless $\xi$ is such that the dynamics becomes the trivial case of thermal operations or dephasing channels. Nonetheless, \cref{rst:firstTTR} says that the dynamics can be TTR by using the same ancilla $\xi'=\xi$ regardless of an evolution time $\Delta t$, while sacrificing accuracy in implementing the Petz recovery map up to an error $\order{(\Delta t)^2}$. Besides, since the effect of a unitary for the interaction is periodic with respect to an evolution time, there exists $\Delta \hat{t} >0$ such that $U_{\Delta \hat{t}}=U_{(-\Delta t)}$.

Moreover, \cref{rst:firstTTR} holds an useful implication in quantum error correction. A Petz recovery map for prior being the maximally mixed state on a code space (also known as `transpose' channel or code-specific Petz recovery map) is often of interest, and our results imply that such Petz recovery maps are approximately implementable by running the same environment backwards with an error of the second order of the evolution time.

\subsection{Second-order approximation}

Similarly, we can find the conditions for the second-order approximate TTR by the second-order matching of the two maps. The final form of the second-order terms of a Petz recovery map is complicated in general: it can be found in \cref{app:second_order}, together with its derivation. Here we present the conditions for the second-order approximate TTR in the special case of a steady state $\gamma$ satisfying $[\gamma_{\texttt{S}}, H_{\texttt{tot}}] = 0$:

\begin{thm}[Second-order approximation]
\label{rst:secondTTR}
If a prior state $\gamma$ is a steady state and $[\gamma_{\texttt{S}}, H_{\texttt{tot}}] = 0$, then $(H_{\texttt{tot}},\xi)$ is second-order approximate tabletop time-reversible with regard to $\xi'$ if it holds that 
\begin{align}
    \Tr_{\texttt{E}} \left( H_{\texttt{tot}} \xi_{\texttt{E}} \right) = \Tr_{\texttt{E}} \left( H_{\texttt{tot}} \xi'_{\texttt{E}} \right)    
\end{align}
and 
\begin{align}
\begin{aligned}
    &\Tr_{\texttt{E}}\left( H_{\texttt{tot}}\rho_{\texttt{S}}H_{\texttt{tot}} \xi_{\texttt{E}} \right)-\frac{1}{2}\left(\Tr_{\texttt{E}}(  H_{\texttt{tot}}^2 \xi_{\texttt{E}} )\rho_{\texttt{S}}+\rho_{\texttt{S}}\Tr_{\texttt{E}}(  H_{\texttt{tot}}^2 \xi_{\texttt{E}} )\right)\\
    &= \Tr_{\texttt{E}}(  H_{\texttt{tot}}(\rho_{\texttt{S}}\ten \xi'_{\texttt{E}} )H_{\texttt{tot}} )\\
    &\quad -\frac{1}{2}\left(\Tr_{\texttt{E}}(  H_{\texttt{tot}}^2 \xi'_{\texttt{E}} )\rho_{\texttt{S}}+\rho_{\texttt{S}}\Tr_{\texttt{E}}(  H_{\texttt{tot}}^2 \xi'_{\texttt{E}} )\right),
\end{aligned}
\end{align}
for any $\rho$.
\end{thm}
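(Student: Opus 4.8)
The plan is to turn the statement, under its two hypotheses, into a second-order Taylor matching of two dilated maps, exactly parallel to the first-order argument sketched for \cref{rst:firstTTR}. The first step is to collapse the Petz map onto the adjoint channel. Because $\gamma$ is a steady state and $\comm{\gamma_{\texttt{S}}}{H_{\texttt{tot}}}=0$, the full-rank operator $\gamma_{\texttt{S}}^{1/2}\otimes\one_{\texttt{E}}$ commutes with $e^{\pm iH_{\texttt{tot}}\Delta t}$ and hence with every block $\bra*{e_k}_{\texttt{E}}e^{iH_{\texttt{tot}}\Delta t}\ket*{e_j}_{\texttt{E}}$. Since $\gamma'=\N(\gamma)=\gamma$, the dressing $\gamma^{1/2}(\cdots)\gamma^{-1/2}$ appearing in $\hat{\N}^{(\texttt{P})}_{\Delta t,\gamma}$ therefore cancels telescopically, and one finds $\hat{\N}^{(\texttt{P})}_{\Delta t,\gamma}=\N^{\dagger}$ for every $\Delta t$. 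This is what makes the special case tractable: instead of invoking the general and lengthy second-order expansion of \cref{app:second_order}, the whole claim reduces to comparing the $\Delta t$-expansions of $\N^{\dagger}$ and $\hat{\N}^{(\texttt{T})}_{\Delta t,\xi'}$.

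Next I would expand both maps to order $(\Delta t)^2$ using $e^{\pm iH_{\texttt{tot}}\Delta t}=\one\pm iH_{\texttt{tot}}\Delta t-\tfrac12 H_{\texttt{tot}}^2(\Delta t)^2+\order{(\Delta t)^3}$. For the tabletop map I write $\hat{\N}^{(\texttt{T})}_{\Delta t,\xi'}(\rho)=\Tr_{\texttt{E}}\big(e^{iH_{\texttt{tot}}\Delta t}(\rho\otimes\xi')e^{-iH_{\texttt{tot}}\Delta t}\big)$, and for the Petz map I use the adjoint-dilation form $\N^{\dagger}(\rho)=\Tr_{\texttt{E}}\big(e^{iH_{\texttt{tot}}\Delta t}(\rho\otimes\one)e^{-iH_{\texttt{tot}}\Delta t}(\one\otimes\xi)\big)$, which follows directly from the definition of the adjoint of a dilation. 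Every simplification is then driven by the single partial-trace identity $\Tr_{\texttt{E}}\big(A(\rho\otimes\one)(\one\otimes\omega)\big)=\Tr_{\texttt{E}}\big(A(\one\otimes\omega)\big)\,\rho$ and its mirror image, valid for any $A$, system operator $\rho$, and ancilla operator $\omega$; applied to the $H_{\texttt{tot}}^2$ contributions it produces the anticommutator pieces $\tfrac12\{\Tr_{\texttt{E}}(H_{\texttt{tot}}^2\xi),\rho\}$ and $\tfrac12\{\Tr_{\texttt{E}}(H_{\texttt{tot}}^2\xi'),\rho\}$.

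Finally I would match the two series order by order, invoking the definition of $k^{\texttt{th}}$-order approximate TTR. Order $0$ is the identity $\rho=\rho$. Order $1$ gives $i\comm{\Tr_{\texttt{E}}(H_{\texttt{tot}}\xi)}{\rho}$ on the Petz side and $i\comm{\Tr_{\texttt{E}}(H_{\texttt{tot}}\xi')}{\rho}$ on the tabletop side, so equating them is the first displayed condition. At order $2$ the only term beyond the anticommutators is the ``jump'' term, and here the placement of the ancilla differs between the two maps: for $\N^{\dagger}$ it sits outside, giving $\Tr_{\texttt{E}}(H_{\texttt{tot}}(\rho\otimes\one)H_{\texttt{tot}}(\one\otimes\xi))=\Tr_{\texttt{E}}(H_{\texttt{tot}}\rho_{\texttt{S}}H_{\texttt{tot}}\xi_{\texttt{E}})$, whereas for $\hat{\N}^{(\texttt{T})}_{\Delta t,\xi'}$ it is sandwiched in the middle, giving $\Tr_{\texttt{E}}(H_{\texttt{tot}}(\rho\otimes\xi')H_{\texttt{tot}})$. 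Equating the full order-$2$ superoperators reproduces the second displayed condition.

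The main obstacle is bookkeeping rather than any new idea: one must keep straight that the ancilla enters the Petz (adjoint) expansion as $(\one\otimes\xi)$ on the outside, while $\xi'$ enters the tabletop expansion inside the sandwich $(\rho\otimes\xi')$, so that the two jump terms are genuinely distinct superoperators despite their symmetric appearance. Care is also needed to confirm the adjoint-dilation rewriting of $\N^{\dagger}$ and to apply the partial-trace identity to each second-order monomial with the correct placement of $\rho$, $H_{\texttt{tot}}^2$, and $\xi$ or $\xi'$; once these placements are fixed, the matching is mechanical and the two stated conditions emerge precisely as the order-$1$ and order-$2$ equalities.
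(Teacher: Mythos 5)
Your proposal is correct, but it takes a genuinely different route from the paper's own proof. The paper (in \cref{app:second_order}) performs the \emph{general} second-order expansion of the Petz map: it expands $W=(\mathcal{N}_{\Delta t}(\gamma))^{-1/2}$ in powers of $\Delta t$, introduces operators $A$ and $B$ as solutions of Lyapunov (Sylvester) equations $\gamma^{1/2}A+A\gamma^{1/2}=-i\comm{H}{\gamma}$ and $\gamma^{1/2}B+B\gamma^{1/2}=C$, derives a general matching condition valid for an arbitrary full-rank prior, and only then specializes by noting that a steady state forces $A=B=0$ and that $\comm{\gamma_{\texttt{S}}}{H_{\texttt{tot}}}=0$ lets the remaining $\gamma^{\pm 1/2}$ factors cancel. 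You instead use both hypotheses at the outset to collapse the Petz map exactly: since $\gamma'=\mathcal{N}(\gamma)=\gamma$ (steady state) and $\gamma_{\texttt{S}}^{\pm 1/2}\otimes\one_{\texttt{E}}$ commutes with $e^{\pm iH_{\texttt{tot}}\Delta t}$ (commutation hypothesis), the dressing cancels and $\hat{\mathcal{N}}^{(\texttt{P})}_{\Delta t,\gamma}=\mathcal{N}_{\Delta t}^{\dagger}$ for every $\Delta t$; this is the same mechanism the paper uses in its unital-channel example, but the paper does not exploit it in its proof of this theorem. After the collapse, your Taylor matching of $\mathcal{N}^{\dagger}$ against $\hat{\mathcal{N}}^{(\texttt{T})}_{\Delta t,\xi'}$, together with the partial-trace identity $\Tr_{\texttt{E}}\bigl(A(\rho\otimes\one)(\one\otimes\omega)\bigr)=\Tr_{\texttt{E}}\bigl(A(\one\otimes\omega)\bigr)\rho$, reproduces both displayed conditions exactly (at first order you actually only need $\Tr_{\texttt{E}}(H_{\texttt{tot}}\xi_{\texttt{E}})-\Tr_{\texttt{E}}(H_{\texttt{tot}}\xi'_{\texttt{E}})\propto\one$, but the stated equality is sufficient, which is all the ``if'' direction requires). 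What each approach buys: yours is shorter and more transparent for this special case, entirely bypassing the $A$, $B$ machinery; the paper's longer route produces as a by-product the general second-order condition for arbitrary priors (including the maximally mixed case treated separately in the appendix), which the theorem's surrounding text explicitly points to.
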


So far, we have considered dynamics for a small evolution time $\Delta t$. Now imagine a dynamics $\N_{T}$ for a time $T$ which is not sufficiently small. One can attempt to recover the noisy dynamics $\N_{T}$ by executing Petz recovery maps that recover dynamics $\N_{\Delta t}$ for a smaller time $\Delta t\equiv \frac{T}{N}$. 
If the Lindbladian~\cite{lindblad1976lindblad,gorini1976lindblad,manzano2020short,stefanini2025lindblad} of the dynamics is known, the entire dynamics can be written as
\begin{align}
    \N_{T} = \circ_{n=1}^{N} \N_{\Delta t} =  e^{{\Delta t} \mathcal{L}} \circ \cdots \circ e^{{\Delta t} \mathcal{L}},
\end{align}
where $\mathcal{L}$ is the Lindbladian of the dynamics of interest. Then, one can apply conditions of approximate TTR to each dynamics $\mathcal{N}_{\Delta t}$ up to some desired order. For instance, under the conditions of the second order approximate TTR, Petz recovery maps can be realized by implementing tabletop reverse maps sequentially, thus \emph{sequential TTR} or \emph{composable TTR} \cite{aw2024role} with an error $\order{\frac{T^3}{N^2}}$, namely, for any $\rho$
\begin{align*}
    \frac{1}{2}\norm{\left(\circ_{n=1}^{N} \hat{\N}^{(\texttt{P})}_{\Delta t,\gamma_{n-1}} 
    - 
    \circ_{n=1}^{N} \hat{\N}^{(\texttt{T})}_{\Delta t,\xi'_{n-1}}\right)(\rho)
    }_1 \leq \order{N(\Delta t)^3} = \order{\frac{T^3}{N^2}},
\end{align*}
where $\gamma_n$ and $\xi'_n$ represent a prior state of a Petz recovery map and an ancilla state of the corresponding tabletop reverse map at each time step $t_n=n\Delta t$.

\subsection{Lindbladian from a random-time collision model}

\begin{figure}[h]
    \centering
    \includegraphics[width=0.95\linewidth]{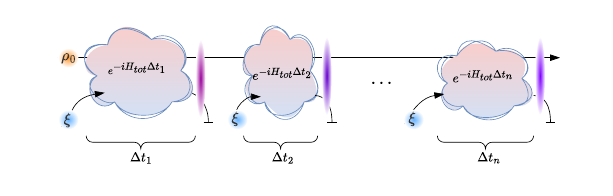}
    \caption{{\bf Random-time Collision Model.}}
    \label{fig:collision}
\end{figure}

The typical derivation of the Lindbladian, however, uses a continuous-time formalism from the outset, although it is implicitly acknowledged that the Born-Markov approximation cannot hold for arbitrarily short times (there must be some time of real interaction, during which the evolution is non-Markovian). Besides, the final formulas are hard to cast as an explicit dependence from the system-bath interaction. In what follows, we consider a collision model~\cite{ scarani2002thermalizing,alicki2007collision,ciccarello2022collision,lacroix2025making}, where we do not rely on the exact Lindbladian of a given dynamics in order to derive conditions of such a sequential TTR.

Consider a random-time collision model, i.e.~a sequential interaction of the system with individual components of the bath (c.f., Fig.~\ref{fig:collision}), where the collision time may vary. Let $gH_{\texttt{tot}}$ be the total Hamiltonian, where we singled out the relevant energy scale $g$ for the problem under study. Without loss of generality, we set $H_{\texttt{tot}} =H_{\texttt{S}}+H_{\texttt{E}}+H_{\texttt{I}}$. A single collision with $\xi$ is given by 
\begin{align}
\begin{aligned}
    \rho_{n} \to \rho_{n+1} &= \Tr_{\texttt{E}}\left(e^{-i gH_{\texttt{tot}}{\Delta t} }(\rho_n \ten \xi_{\texttt{E}})e^{i gH_{\texttt{tot}}{\Delta t}} \right),\\
    &\approx \rho_n + g\Delta t \mathcal{H}\rho_n + (g\Delta t)^2 \sum_{j,k}p_k \mathcal{D}[L_{jk}]\rho_n, \label{eq:collision}
\end{aligned}
\end{align}
having defined the superoperators 
\begin{align}
    \mathcal{H}\rho &\equiv -i\comm{H_{\texttt{S}}+H_{\texttt{Ls}}(\xi)}{\rho}, \\ \mathcal{D}[L_{jk}]\rho &\equiv L_{jk} \rho L_{jk}^\dagger - \frac{1}{2}\acomm{L_{jk}^\dagger L_{jk}}{\rho},
\end{align}
with 
\begin{align}
    H_{\texttt{Ls}}(\xi)&\equiv\Tr_{\texttt{E}} \left((\one_{\texttt{S}}\ten \xi_{\texttt{E}}) H_{\texttt{I}}\right), \\
    L_{jk} &\equiv \bra*{f_j}_{\texttt{E}}{H_{\texttt{tot}}}\ket{e_k}_{\texttt{E}},
\end{align}
for any orthonormal basis $\{\ket*{f_j}\}_j$. Similar to the exact TTR, we will choose $\ket*{f_j}:=\ket*{e'_j}$ the eigenvectors of $\xi'$. As such, Eq.~\eqref{eq:collision} motivates the Lindbladian 
\begin{align}
    \mathcal{L}= g\mathcal{H} + \Gamma \sum_{j,k} p_k \mathcal{D}[L_{jk}],
\end{align}
with the rate $\Gamma$ of the event of a collision. When collision times are finite and $\Gamma$ is sufficiently small such that $\Gamma\sim g^2 \ev{\Delta t}$, this Lindbladian captures a coarse-grained behavior of the scattering process $(H_{\texttt{tot}},\xi)$, with the superoperators $\mathcal{H}$ and  $\Gamma p_k \mathcal{D}[L_{jk}]$. The superoperator $\mathcal{H}$ accounts for a closed dynamics governed by the system Hamiltonian moderated by a so-called Lamb shift Hamiltonian $H_{\texttt{Ls}}(\xi)$, which describes a shift of the system's energy levels due to its interaction with the ancilla. The superoperator $\Gamma p_k\mathcal{D}[L_{jk}]$, called dissipator, accounts for a coarse-grained dissipation in the direction of jump operators $L_{jk}$ with its weight $\Gamma p_k$. 

Under the same collision model, the tabletop reverse Lindbladian of the reverse collisional dynamics with ancilla $\xi'$ is given by 
\begin{align}
\hat{\mathcal{L}}^{(\texttt{T})}_{\xi'}=g\hat{\mathcal{H}}^{(\texttt{T})}_{\xi'} + \Gamma \sum_{j,k} p'_j \mathcal{D}[\hat{L}_{jk}^{(\texttt{T})}]
\end{align}
with a Hamiltonian and jump operators 
\begin{align}
     \hat{\mathcal{H}}_{\xi'}^{(\texttt{T})} \rho =& -i \comm{ -H_{\texttt{S}}-H_{\texttt{Ls}}(\xi')}{\rho} \\
     \hat{L}^{(\texttt{T})}_{jk} =& \,L_{jk}^\dagger, 
\end{align}   
describes tabletop reverse collisions given by $(-H_\texttt{tot},\xi')$.

Following Ref.~\cite{kwon2022reversing}, for a prior state $\gamma_t$  
whose spectral decomposition $\gamma_t\stackrel{s.d}{=}\sum_{\lambda_t} \lambda_t \op{\lambda_t}$ at time $t$, we compute a Lindbladian
\begin{align}
    \hat{\mathcal{L}}^{(\texttt{P})}_{\gamma_t} \equiv g\hat{\mathcal{H}}^{(\texttt{P})}_{\gamma_t} + \Gamma\sum_{j,k} p_k\mathcal{D}[\hat{L}^{(\texttt{P})}_{jk}]
\end{align}
with a Hamiltonian and jump operators 
\begin{align}
    \hat{\mathcal{H}}_{\gamma_t}^{(\texttt{P})}\rho&\equiv -i\comm{-H_{\texttt{S}} -H_{\texttt{Ls}}(\xi) + H_{\texttt{C}}}{\rho}\\
    \hat{L}^{(\texttt{P})}_{jk}&=\hat{L}^{(\texttt{P})}_{jk}(\gamma_t)\equiv\gamma_t^{1/2}L_{jk}^\dagger\gamma_t^{-1/2}, 
\end{align}
defining the correction Hamiltonian $H_{\texttt{C}}=H_{\texttt{C},\gamma_t}$ as
\begin{eqnarray}
\begin{aligned}
    \!\!\!\!\!H_{\texttt{C},\gamma_t}\equiv\frac{1}{2i}\sum_{\lambda_t,\lambda_t'}\left(\frac{\sqrt{\lambda_t}-\sqrt{\lambda_t'}}{\sqrt{\lambda_t}+\sqrt{\lambda_t'}}\right)\bra{\lambda_t}M(\gamma_t) \ket*{\lambda_t'} \op{\lambda_t}{\lambda_t'},\label{eq:hc}
\end{aligned}
\end{eqnarray}
with $M(\gamma_t)\equiv \Gamma\sum_{jk}p_k (L_{jk}\+L_{jk}+ \gamma_t^{-1/2}L_{jk}\gamma_t L_{jk}\+\gamma_t^{-1/2})$. The Lindbladian $\hat{{\cal L}}^{(\texttt{P})}_{\gamma_t}$ is not the exact Lindbladian of the Petz recovery map of the forward collisional dynamics unless $\gamma_t$ is a steady state, but its Lindblad dynamics is sufficiently close to the Petz recovery map. The following lemma shows this (see \cref{app:proof_lem} for the proof):

\begin{lem} \label{lem:Petz_Lindblad}
Consider a collisional dynamics $\N_{\Delta t}:=e^{\Delta t \mathcal{L}}$ for an evolution time $\Delta t$. Let $\hat{{\cal N}}^{(\texttt{P})}_{\Delta t, \gamma_t}$ be the Petz recovery map of the dynamics $\N_{\Delta t}$ for a prior $\gamma_t$ at a time $t$. The Lindblad dynamics $e^{ \Delta t \hat{{\cal L}}^{(\texttt{P})}_{\gamma_t}}$ is close to the Petz recovery map $\hat{{\cal N}}^{(\texttt{P})}_{\Delta t, \gamma_t}$ up to an error ${\cal O}\left( (\Delta t)^2 \right)$, i.e., 
    \begin{equation}
        \frac{1}{2} \left \| \hat{{\cal N}}^{(\texttt{P})}_{\Delta t, \gamma_t} - e^{ \Delta t \hat{{\cal L}}^{(\texttt{P})}_{\gamma_t}} \right\|_\diamond = {\cal O}\left( (\Delta t)^2 \right),    
    \end{equation}
    where $\norm{\bullet}_\diamond$ is the diamond norm~\cite{aharonov1998diamond}.
\end{lem}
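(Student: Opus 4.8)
The plan is to expand both $\hat{\N}^{(\texttt{P})}_{\Delta t,\gamma_t}$ and $e^{\Delta t\hat{\mathcal{L}}^{(\texttt{P})}_{\gamma_t}}$ to first order in $\Delta t$, show that their generators coincide, and then upgrade the resulting superoperator estimate to the diamond norm. Since $e^{\Delta t\hat{\mathcal{L}}^{(\texttt{P})}_{\gamma_t}}=\mathcal{I}+\Delta t\,\hat{\mathcal{L}}^{(\texttt{P})}_{\gamma_t}+\mathcal{O}((\Delta t)^2)$ trivially, the entire content of the lemma is the claim that the first-order generator of the Petz recovery map of $\N_{\Delta t}=e^{\Delta t\mathcal{L}}$ is exactly the declared $\hat{\mathcal{L}}^{(\texttt{P})}_{\gamma_t}$, \emph{including} the correction Hamiltonian $H_{\texttt{C},\gamma_t}$. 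Everything reduces to a careful first-order perturbation theory of $\hat{\N}^{(\texttt{P})}_{\Delta t,\gamma_t}(\rho)=\sqrt{\gamma_t}\,\N_{\Delta t}^\dagger\!\big(\N_{\Delta t}(\gamma_t)^{-1/2}\rho\,\N_{\Delta t}(\gamma_t)^{-1/2}\big)\sqrt{\gamma_t}$.

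First I would expand the three building blocks using $\N_{\Delta t}=\mathcal{I}+\Delta t\,\mathcal{L}+\mathcal{O}((\Delta t)^2)$, so that $\N_{\Delta t}^\dagger=\mathcal{I}+\Delta t\,\mathcal{L}^\dagger+\cdots$ and $\N_{\Delta t}(\gamma_t)=\gamma_t+\Delta t\,\mathcal{L}(\gamma_t)+\cdots$. The delicate factor is $\N_{\Delta t}(\gamma_t)^{-1/2}$, because $\gamma_t$ and $\mathcal{L}(\gamma_t)$ need not commute. I would control it with the integral (Fréchet-derivative) representation of the matrix square root, $D(A^{1/2})[E]=\int_0^\infty e^{-s\sqrt{A}}E\,e^{-s\sqrt{A}}\,ds$, combined with $D(A^{-1})[E]=-A^{-1}EA^{-1}$, giving
\begin{equation*}
\N_{\Delta t}(\gamma_t)^{-1/2}=\gamma_t^{-1/2}-\Delta t\,\gamma_t^{-1/2}\!\left(\int_0^\infty\! e^{-s\sqrt{\gamma_t}}\,\mathcal{L}(\gamma_t)\,e^{-s\sqrt{\gamma_t}}\,ds\right)\!\gamma_t^{-1/2}+\cdots.
\end{equation*}
This is the same integral kernel that defines the operator $A$ in \cref{rst:firstTTR}; evaluated in the eigenbasis of $\gamma_t$ it produces the factor $1/(\sqrt{\lambda_t}+\sqrt{\lambda_t'})$, which is the seed of the spectral weight appearing in $H_{\texttt{C},\gamma_t}$.

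The core step is to collect the first-order term and recognize that it is of Lindblad form. Writing $H'=H_{\texttt{S}}+H_{\texttt{Ls}}(\xi)$ for the effective system Hamiltonian, the dissipative ``gain'' part of $\mathcal{L}^\dagger$, sandwiched by $\sqrt{\gamma_t}(\cdot)\sqrt{\gamma_t}$ and the zeroth-order $\gamma_t^{-1/2}(\cdot)\gamma_t^{-1/2}$, collapses cleanly into $\hat{L}^{(\texttt{P})}_{jk}\rho\,\hat{L}^{(\texttt{P})\dagger}_{jk}$ with $\hat{L}^{(\texttt{P})}_{jk}=\gamma_t^{1/2}L_{jk}^\dagger\gamma_t^{-1/2}$. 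The remaining pieces, however, are individually \emph{not} in GKSL form: conjugation by $\gamma_t^{\pm1/2}$ turns the Hermitian loss operators $L_{jk}^\dagger L_{jk}$ and the Hamiltonian $H'$ into non-Hermitian objects, so both the dissipative loss term and the sandwiched Hamiltonian term $ig\big(\gamma_t^{1/2}H'\gamma_t^{-1/2}\rho-\rho\,\gamma_t^{-1/2}H'\gamma_t^{1/2}\big)$ generate spurious anticommutators $\{\,\cdot\,,\rho\}$. I would show that these cancel exactly once the dissipative mismatch, the sandwiched Hamiltonian term, and the square-root-derivative term above are combined, so that what survives assembles exactly into $\hat{\mathcal{L}}^{(\texttt{P})}_{\gamma_t}=g\hat{\mathcal{H}}^{(\texttt{P})}_{\gamma_t}+\Gamma\sum_{jk}p_k\mathcal{D}[\hat{L}^{(\texttt{P})}_{jk}]$: the reversed Hamiltonian $-H_{\texttt{S}}-H_{\texttt{Ls}}(\xi)$, the GKSL dissipator built from the Petz jumps $\hat{L}^{(\texttt{P})}_{jk}$, and the correction Hamiltonian $H_{\texttt{C},\gamma_t}$, in which $M(\gamma_t)$ appears as the combination $L_{jk}^\dagger L_{jk}+\hat{L}^{(\texttt{P})\dagger}_{jk}\hat{L}^{(\texttt{P})}_{jk}$ of forward and reversed loss operators and the weight $(\sqrt{\lambda_t}-\sqrt{\lambda_t'})/(\sqrt{\lambda_t}+\sqrt{\lambda_t'})$ is what the kernel evaluates to in the eigenbasis of $\gamma_t$. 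This matching is the heart of the computation and follows the strategy of Ref.~\cite{kwon2022reversing}.

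The main obstacle is precisely this reorganization: verifying that the non-Hermitian, $\gamma_t$-conjugated terms really do recombine into a completely positive dissipator plus a single Hermitian Hamiltonian, rather than leaving a residual anticommutator that would spoil the GKSL structure. A clean consistency check is the steady-state case $\mathcal{L}(\gamma_t)=0$, where the square-root-derivative term drops out entirely and the cancellation of the spurious anticommutators is enforced directly by the stationarity relation $-ig[H_{\texttt{S}}+H_{\texttt{Ls}}(\xi),\gamma_t]+\Gamma\sum_{jk}p_k\mathcal{D}[L_{jk}](\gamma_t)=0$; there the matching even holds to all orders, recovering the known exactness of the Petz Lindbladian for stationary priors. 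Having established $\hat{\N}^{(\texttt{P})}_{\Delta t,\gamma_t}=\mathcal{I}+\Delta t\,\hat{\mathcal{L}}^{(\texttt{P})}_{\gamma_t}+\mathcal{O}((\Delta t)^2)$ as an identity of superoperators, I would finish by noting that, in finite dimension, all superoperator norms are equivalent and both $\mathcal{L}$ and $\hat{\mathcal{L}}^{(\texttt{P})}_{\gamma_t}$ are bounded, so the second-order Taylor remainders are uniformly controlled and $\tfrac12\big\|\hat{\N}^{(\texttt{P})}_{\Delta t,\gamma_t}-e^{\Delta t\hat{\mathcal{L}}^{(\texttt{P})}_{\gamma_t}}\big\|_\diamond=\mathcal{O}((\Delta t)^2)$.
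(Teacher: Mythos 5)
Your proposal follows essentially the same route as the paper's proof: both expand $\hat{\N}^{(\texttt{P})}_{\Delta t,\gamma_t}$ and $e^{\Delta t\hat{\mathcal{L}}^{(\texttt{P})}_{\gamma_t}}$ to first order in $\Delta t$, control $\N_{\Delta t}(\gamma_t)^{-1/2}$ with the same integral (Fr\'echet-derivative) representation giving the operator $A$, and match the two generators including $H_{\texttt{C},\gamma_t}$ before passing to the diamond norm --- the paper simply organizes your asserted cancellation by writing the residual as $\Delta t\,(B\rho_t+\rho_t B^\dagger)$ and verifying $B=0$ matrix-element-wise in the eigenbasis of $\gamma_t$. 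Your identification of $M(\gamma_t)$ as $\Gamma\sum_{jk}p_k\bigl(L_{jk}^\dagger L_{jk}+\hat{L}^{(\texttt{P})\dagger}_{jk}\hat{L}^{(\texttt{P})}_{jk}\bigr)$ and your steady-state consistency check agree with that computation.
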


This immediately leads to the Lindbladian conditions of approximate TTR (see \cref{app:Lindblad_proof} for the proof): 
 
\begin{thm}\label{rst:Lindblad}
    Consider a collisional dynamics $\N_{\Delta t}:=e^{\Delta t \mathcal{L}}$ and a collisional tabletop reverse dynamics $e^{\Delta t \hat{\mathcal{L}}^{(\texttt{T})}_{\xi'}}$. A Petz recovery map $\hat{{\cal N}}^{(\texttt{P})}_{\Delta t,\gamma_t}$ of the dynamics $\N_{\Delta t}$ can be approximately implemented by $e^{\Delta t \hat{\mathcal{L}}^{(\texttt{T})}_{\xi'}}$ up to an error $\order{(\Delta t)^2}$, i.e., 
    \begin{align}
        \frac{1}{2}\norm{
        \hat{{\cal N}}^{(\texttt{P})}_{\Delta t,\gamma_t} - e^{\Delta t \hat{\mathcal{L}}^{(\texttt{T})}_{\xi'}}
        }_{\diamond}=\order{(\Delta t)^2}, 
    \end{align}
    if $\hat{\mathcal{L}}^{(\texttt{P})}_{\gamma_t}=\hat{\mathcal{L}}^{(\texttt{T})}_{\xi'} + \order{\Delta t}$ or equivalently, 
    \begin{align}
    \begin{aligned}
        &
    H_{\texttt{Ls}}(\xi) - H_{\texttt{C}} = H_{\texttt{Ls}}(\xi') + \alpha\one + \order{\Delta t} 
    \\
    &\sum_{jk}p_k\mathcal{D}[\gamma_t^{1/2}\bra*{e_k}_{\texttt{E}}H_{\texttt{tot}}\ket*{e'_j}_{\texttt{E}}\gamma_t^{-1/2}] \\
    &\qquad =\sum_{jk}p'_j\mathcal{D}[\bra*{e_k}_{\texttt{E}}H_{\texttt{tot}}\ket*{e'_j}_{\texttt{E}}] + \order{\Delta t}, \label{eq:TTRcondition_HD}
    \end{aligned}
\end{align}
for some number $\alpha \in \mathbb{R}$.
Moreover, the complex term of the correction Hamiltonian $H_{\texttt{C}}=H_{\texttt{C},\gamma_t}$ vanishes if 
    \begin{eqnarray}
    \begin{aligned}
        \gamma_t^{1/2}\bra*{e_k}_{\texttt{E}}{H_{\texttt{tot}}}\ket*{e'_j}_{\texttt{E}}\gamma_t^{-1/2}=c_{jk}\bra*{e_k}_{\texttt{E}}{H_{\texttt{tot}}}\ket*{e'_j}_{\texttt{E}},
    \end{aligned}
    \end{eqnarray}
    for some number $c_{jk}$.
\end{thm}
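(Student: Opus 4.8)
The plan is to treat \cref{lem:Petz_Lindblad} as a black box and reduce everything to a first-order comparison of the two Lindblad generators, then unpack the generator-level condition $\hat{\mathcal{L}}^{(\texttt{P})}_{\gamma_t}=\hat{\mathcal{L}}^{(\texttt{T})}_{\xi'}+\order{\Delta t}$ into its coherent and dissipative pieces. For the error bound I would first insert the auxiliary flow $e^{\Delta t\hat{\mathcal{L}}^{(\texttt{P})}_{\gamma_t}}$ and use the triangle inequality for the diamond norm,
\begin{align*}
    \tfrac12\norm{\hat{{\cal N}}^{(\texttt{P})}_{\Delta t,\gamma_t}-e^{\Delta t\hat{\mathcal{L}}^{(\texttt{T})}_{\xi'}}}_\diamond
    &\le \tfrac12\norm{\hat{{\cal N}}^{(\texttt{P})}_{\Delta t,\gamma_t}-e^{\Delta t\hat{\mathcal{L}}^{(\texttt{P})}_{\gamma_t}}}_\diamond\\
    &\quad +\tfrac12\norm{e^{\Delta t\hat{\mathcal{L}}^{(\texttt{P})}_{\gamma_t}}-e^{\Delta t\hat{\mathcal{L}}^{(\texttt{T})}_{\xi'}}}_\diamond.
\end{align*}
The first term is $\order{(\Delta t)^2}$ directly by \cref{lem:Petz_Lindblad}. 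For the second I would set $\mathcal{E}:=\hat{\mathcal{L}}^{(\texttt{P})}_{\gamma_t}-\hat{\mathcal{L}}^{(\texttt{T})}_{\xi'}$, which by hypothesis has super-operator norm $\order{\Delta t}$, and invoke the identity $e^{\Delta t A}-e^{\Delta t B}=\int_0^{\Delta t}e^{(\Delta t-s)A}\,\mathcal{E}\,e^{sB}\,ds$. In finite dimension the two propagators have $O(1)$ norm on the short interval $[0,\Delta t]$, so the integral is bounded by $\Delta t\,\norm{\mathcal{E}}=\Delta t\cdot\order{\Delta t}=\order{(\Delta t)^2}$, and adding the two contributions gives the claim.

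For the equivalence of the generator condition with \eqref{eq:TTRcondition_HD}, I would exploit that both generators are already in GKSL form and that the splitting of a Lindbladian into a coherent (commutator) part and a dissipative part is canonical. Hence $\hat{\mathcal{L}}^{(\texttt{P})}_{\gamma_t}=\hat{\mathcal{L}}^{(\texttt{T})}_{\xi'}+\order{\Delta t}$ holds iff the coherent parts and the dissipative parts agree separately up to $\order{\Delta t}$. Matching the dissipative parts $\Gamma\sum_{jk}p_k\mathcal{D}[\hat L^{(\texttt{P})}_{jk}]=\Gamma\sum_{jk}p'_j\mathcal{D}[\hat L^{(\texttt{T})}_{jk}]+\order{\Delta t}$ and substituting $\hat L^{(\texttt{P})}_{jk}=\gamma_t^{1/2}\bra*{e_k}_{\texttt{E}}H_{\texttt{tot}}\ket*{e'_j}_{\texttt{E}}\gamma_t^{-1/2}$ and $\hat L^{(\texttt{T})}_{jk}=\bra*{e_k}_{\texttt{E}}H_{\texttt{tot}}\ket*{e'_j}_{\texttt{E}}$ reproduces the second line of \eqref{eq:TTRcondition_HD}. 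For the coherent parts, the superoperators $\comm{-H_{\texttt{S}}-H_{\texttt{Ls}}(\xi)+H_{\texttt{C}}}{\cdot}$ and $\comm{-H_{\texttt{S}}-H_{\texttt{Ls}}(\xi')}{\cdot}$ coincide precisely when the two Hermitian Hamiltonians differ by a real multiple of the identity (the kernel of the commutator map); cancelling the common $-H_{\texttt{S}}$ yields the first line of \eqref{eq:TTRcondition_HD} with the free constant $\alpha\one$, $\alpha\in\mathbb{R}$.

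For the final claim on $H_{\texttt{C}}$ I would show that the eigenoperator condition $\gamma_t^{1/2}L_{jk}^\dagger\gamma_t^{-1/2}=c_{jk}L_{jk}^\dagger$ forces $M(\gamma_t)$ to commute with $\gamma_t$. Taking matrix elements in the eigenbasis of $\gamma_t$ shows $c_{jk}$ must equal a ratio $\sqrt{\lambda_t}/\sqrt{\lambda_t'}$, hence is real and positive; iterating the condition gives $\gamma_t L_{jk}^\dagger\gamma_t^{-1}=c_{jk}^2L_{jk}^\dagger$ and its adjoint, from which both $L_{jk}^\dagger L_{jk}$ and $\gamma_t^{-1/2}L_{jk}\gamma_t L_{jk}^\dagger\gamma_t^{-1/2}=|c_{jk}|^2L_{jk}L_{jk}^\dagger$ commute with $\gamma_t$. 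Therefore $M(\gamma_t)$ is diagonal in the eigenbasis of $\gamma_t$, so $\bra{\lambda_t}M(\gamma_t)\ket*{\lambda_t'}$ vanishes whenever $\lambda_t\ne\lambda_t'$; since the prefactor $\tfrac{\sqrt{\lambda_t}-\sqrt{\lambda_t'}}{\sqrt{\lambda_t}+\sqrt{\lambda_t'}}$ in \eqref{eq:hc} vanishes on the diagonal $\lambda_t=\lambda_t'$, every surviving term is killed and $H_{\texttt{C}}=0$.

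The delicate point I expect to be the main obstacle is the necessity (``only if'') direction of the equivalence: it rests on the uniqueness of the coherent/dissipative decomposition of a GKSL generator, and on making precise what ``$\order{\Delta t}$'' means uniformly for the super-operators, including the consistent bookkeeping of the rate $\Gamma\sim g^2\ev{\Delta t}$ and of the implicit time dependence of $\gamma_t$ across the step. Making the first-order propagator estimate in the error bound uniform in $t$---so that the per-step $\order{(\Delta t)^2}$ errors can subsequently be summed over the $N$ collisions in the sequential scheme---is the other place requiring care.
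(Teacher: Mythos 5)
Your proposal is correct and takes essentially the same route as the paper's proof: the diamond-norm bound is reduced to \cref{lem:Petz_Lindblad} plus the generator-level hypothesis (your explicit Duhamel/triangle-inequality step is exactly what the paper compresses into ``immediate''), and your argument that $H_{\texttt{C},\gamma_t}=0$ rests, like the paper's, on showing that $M(\gamma_t)$ commutes with $\gamma_t$ so that the antisymmetric prefactor in Eq.~\eqref{eq:hc} annihilates every term---the paper derives this commutation via commutators with $\ln\gamma_t$ whereas you conjugate by $\gamma_t^{1/2}$ and track the eigenvalue ratios $c_{jk}=\sqrt{\lambda_t}/\sqrt{\lambda_t'}$, an inessential difference. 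The ``only if'' direction of the generator/condition equivalence that you flag as the delicate point is neither used nor proved by the paper (only the chain conditions $\Rightarrow$ generator equality $\Rightarrow$ bound is needed), so it poses no obstacle to the theorem as stated.
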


Under this collision model, one can proceed to an approximate implementation of Petz recovery maps through a sequential TTR when the evolution time $T$ of the noisy dynamics is not sufficiently small:

\begin{cor}\label{rst:sequentialTTR}
    The concatenation of Petz recovery maps for each collision and that of tabletop reverse maps satisfy 
    \begin{align}
    \frac{1}{2}\norm{\circ_{n=1}^{N} \hat{\N}^{(\texttt{P})}_{\Delta t,\gamma_{n-1}} - \circ_{n=1}^{N} e^{\Delta t \hat{\mathcal{L}}^{(\texttt{T})}_{\xi'_{n-1}}}}_{\diamond} = \order{\frac{T^2}{N}}
\end{align}
if it holds that
\begin{eqnarray}
    \begin{aligned}
        &H_{\texttt{Ls}}(\xi) = H_{\texttt{Ls}}(\xi'_n) + \alpha \one + \order{\Delta t}\\
        &\sqrt{p_k} \gamma_n^{1/2}\bra*{e_k}_{\texttt{E}}{H_{\texttt{tot}}}\ket*{e'_j(n)}_{\texttt{E}}\gamma_n^{-1/2}\\
        &\quad \quad =\sqrt{p'_j(t_n)}\bra*{e_k}_{\texttt{E}}{H_{\texttt{tot}}}\ket*{e'_j(n)}_{\texttt{E}} + \order{\Delta t},
        \label{eq:sequentialTTR}
    \end{aligned}
    \end{eqnarray}
    for some number $\alpha \in \mathbb{R}$, where $\xi'_n \stackrel{s.d.}{=} \sum_{j} p'_j(n) \op*{e'_j(n)}$.
\end{cor}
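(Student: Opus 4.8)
The plan is to reduce the global diamond-norm bound of \cref{rst:sequentialTTR} to a sum of single-collision bounds already established in \cref{rst:Lindblad}, via a telescoping decomposition together with the monotonicity of the diamond norm under composition with channels. Writing $A_n \equiv \hat{\N}^{(\texttt{P})}_{\Delta t,\gamma_{n-1}}$ and $B_n \equiv e^{\Delta t \hat{\mathcal{L}}^{(\texttt{T})}_{\xi'_{n-1}}}$ for the $n$-th factors, both are completely positive and trace preserving: $A_n$ is a Petz recovery map, while $B_n$ is generated by the bona fide Lindbladian $\hat{\mathcal{L}}^{(\texttt{T})}_{\xi'_{n-1}}$. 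With the convention $\circ_{n=1}^{N} C_n = C_N \circ \cdots \circ C_1$ and empty compositions read as the identity, the first step is the telescoping identity
\begin{align*}
    \circ_{n=1}^{N} A_n - \circ_{n=1}^{N} B_n = \sum_{n=1}^{N} \Big( \circ_{m=n+1}^{N} B_m \Big) \circ (A_n - B_n) \circ \Big( \circ_{m=1}^{n-1} A_m \Big).
\end{align*}

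Next I would take the diamond norm of both sides and apply the triangle inequality. Since $\circ_{m=n+1}^{N} B_m$ and $\circ_{m=1}^{n-1} A_m$ are themselves channels and the diamond norm obeys $\norm{\Phi \circ \Delta \circ \Psi}_\diamond \le \norm{\Phi}_\diamond\,\norm{\Delta}_\diamond\,\norm{\Psi}_\diamond$ with $\norm{\Phi}_\diamond = \norm{\Psi}_\diamond = 1$ for channels, each summand is bounded by $\tfrac{1}{2}\norm{A_n - B_n}_\diamond$, i.e. by the single-collision quantity controlled in \cref{rst:Lindblad}. It then remains to verify that the hypotheses \eqref{eq:sequentialTTR}, assumed at every time step, are exactly the per-step conditions \eqref{eq:TTRcondition_HD} of \cref{rst:Lindblad}. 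The element-wise matching in the second line of \eqref{eq:sequentialTTR} states that $\hat{L}^{(\texttt{P})}_{jk} = \gamma_{n-1}^{1/2} \bra*{e_k}_{\texttt{E}} H_{\texttt{tot}} \ket*{e'_j}_{\texttt{E}} \gamma_{n-1}^{-1/2}$ is proportional to $\hat{L}^{(\texttt{T})}_{jk} = \bra*{e_k}_{\texttt{E}} H_{\texttt{tot}} \ket*{e'_j}_{\texttt{E}}$ with ratio $\sqrt{p'_j/p_k}$ up to $\order{\Delta t}$; squaring the amplitudes turns $\sum_{jk} p_k \mathcal{D}[\hat{L}^{(\texttt{P})}_{jk}]$ into $\sum_{jk} p'_j \mathcal{D}[\hat{L}^{(\texttt{T})}_{jk}]$, which is the dissipator condition of \eqref{eq:TTRcondition_HD}. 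This same proportionality is the clause of \cref{rst:Lindblad} that annihilates the complex part of the correction Hamiltonian $H_{\texttt{C}}$, so the Hamiltonian condition of \eqref{eq:TTRcondition_HD} collapses to the Lamb-shift matching $H_{\texttt{Ls}}(\xi) = H_{\texttt{Ls}}(\xi'_{n-1}) + \alpha\one + \order{\Delta t}$, the first line of \eqref{eq:sequentialTTR}. Hence \cref{rst:Lindblad} (which itself rests on \cref{lem:Petz_Lindblad}) applies at every step and gives $\tfrac{1}{2}\norm{A_n - B_n}_\diamond = \order{(\Delta t)^2}$.

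Summing the $N$ contributions with $\Delta t = T/N$ then yields $N \cdot \order{(\Delta t)^2} = N \cdot \order{(T/N)^2} = \order{T^2/N}$, which is the claimed scaling. The step I expect to be the main obstacle is not the telescoping itself but the \emph{uniformity} of the per-step error: the implicit constant in the single-collision bound of \cref{rst:Lindblad} depends on the prior $\gamma_{n-1}$ through the factors $\gamma^{\pm 1/2}$ and through $M(\gamma_t)$, so summing term by term produces a clean $\order{T^2/N}$ only if these constants are bounded uniformly along the whole trajectory $\{\gamma_n\}_{n=0}^{N}$. Establishing this requires controlling the spectra of the evolving priors---ensuring they stay full rank with a uniformly bounded condition number as $N \to \infty$---which is the quantitatively delicate part and should be stated as a standing assumption (or verified from the collision dynamics) for the bound to hold with an $N$-independent prefactor.
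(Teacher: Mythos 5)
Your proposal is correct and takes essentially the same route as the paper's own proof: apply \cref{rst:Lindblad} at each collision to obtain a per-step error of $\order{(\Delta t)^2}$ (after checking, as you do, that the hypotheses \eqref{eq:sequentialTTR} imply the per-step conditions with $H_{\texttt{C}}$ suppressed), then accumulate over $N$ steps with $\Delta t = T/N$ to get $\order{T^2/N}$; your explicit telescoping identity and diamond-norm contractivity under composition with channels simply make rigorous the error-accumulation step that the paper writes informally. Your closing caveat---that the implicit constants depend on $\gamma_{n-1}$ through $\gamma^{\pm 1/2}$ and must be bounded uniformly along the trajectory for the prefactor to be $N$-independent---is a genuine subtlety the paper's proof glosses over, and flagging it as a standing assumption is a sound addition.
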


\begin{proof}
	Following \cref{rst:Lindblad}, if Eq.~\eqref{eq:sequentialTTR} holds for every $n$, the difference between a Petz recovery map and tabletop reverse map for each time intervals scales as 
    \begin{align}
        \frac{1}{2} \left( \hat{\N}^{(\texttt{P})}_{\Delta t,\gamma_{n-1}} -  e^{\Delta t \hat{\mathcal{L}}^{(\texttt{T})}_{\xi'_{n-1}}} \right) (\rho) = \order{(\Delta t)^2}
    \end{align}
    for any $\rho$. Then it immediately follows that 
    \begin{align}
    \begin{aligned}
        &\frac{1}{2} \left( \circ_{n=1}^{N}\hat{\N}^{(\texttt{P})}_{\Delta t,\gamma_{n-1}} -  \circ_{n=1}^{N} e^{\Delta t \hat{\mathcal{L}}^{(\texttt{T})}_{\xi'_{n-1}}} \right) (\rho) \\
        =& \frac{1}{2} \left( \circ_{n=1}^{N}\hat{\N}^{(\texttt{P})}_{\Delta t,\gamma_{n-1}} -  \circ_{n=1}^{N} \left( e^{\Delta t \hat{\mathcal{L}}^{(\texttt{T})}_{\xi'_{n-1}}} - 2{\cal O}((\Delta t)^2) \right) \right) (\rho)\\
        =& \underbrace{{\cal O}((\Delta t)^2)+{\cal O}((\Delta t)^2)+\cdots+{\cal O}((\Delta t)^2)}_{N}.
    \end{aligned}
    \end{align}
    Since $\Delta t = \frac{T}{N}$, we arrive at 
    \begin{align}
        \frac{1}{2}\norm{\circ_{n=1}^{N} \hat{\N}^{(\texttt{P})}_{\Delta t,\gamma_{n-1}} - \circ_{n=1}^{N} e^{\Delta t \hat{\mathcal{L}}^{(\texttt{T})}_{\xi'_{n-1}}}}_{\diamond} = \order{\frac{T^2}{N}}.
    \end{align}
\end{proof}

It is worth noting that $\gamma_{n}$'s have been merely referred to the prior states at time $t_n$. Yet, when it comes to sequential reversal, it is important that the prior states must be chosen such that they follow the trajectory determined by the forward dynamics. To facilitate TTR along the trajectory, one thus needs to 
take into account propagated prior states.

\section{Discussion}

The Petz recovery map has proven to be a powerful mathematical tool in many applications of quantum information. However, its physical realization has been considered challenging due to its complex form. 
In this work, we study the exact and approximate conditions for tabletop time-reversibility (TTR), under which the Petz recovery map can be implemented using resources similar to, or even identical to, those accessible from the forward dynamics. Thus, TTR-based approaches enable the implementation of a Petz recovery map to be less resource-intensive. 

Lastly, we highlight several open questions and possible future research directions, both general and technical.
\begin{enumerate}
    \item Our definition of TTR, introduced in Ref.~\cite{aw2024role}, assumes that what is difficult is to create a new dynamics, and thus requests that the Petz map be implemented with a dilation involving $U^\dagger$. The preparation of the ancilla $\xi'$ is considered as a free operation. In the lab, whether an operation is considered free or resourceful depends strongly on the experimental platform that is being used. Thus one may want to consider alternative definitions of TTR, perhaps platform-specific ones.
    \item In all of the examples we considered, a dynamics appears to be tabletop time-reversible whenever a prior state is a steady state. We do not have a proof that this is true in general, and leave it for a future investigation.
    \item The whole inspiration for the TTR condition came from the observation that thermal operations satisfy it \cite{alhambra2018work}. In the language introduced here, they are $\infty^\texttt{th}$-order approximate TTR: the Petz recovery map is just the backward process with the same ancilla as the forward process, and regardless of the evolution time. We conjecture that thermal operations are the only operations with this property.
\end{enumerate}

\section{Acknowledgement}
We thank Paolo Abiuso, Clive Aw, Xueyuan Hu, Stefan Nimmrichter, Matteo Scandi, Paul Skrzypczyk, Jacopo Surace and Lin Htoo Zaw for insightful discussions.

M.S. and V.S. are supported by the National Research Foundation, Singapore through the National Quantum Office, hosted in A*STAR, under its Centre for Quantum Technologies Funding Initiative (S24Q2d0009); and by the Ministry of Education, Singapore, under the Tier 2 grant ``Bayesian approach to irreversibility'' (Grant No.~MOE-T2EP50123-0002). H.K. is supported by the KIAS Individual Grant No. CG085302 at Korea Institute for Advanced Study and National Research Foundation of Korea (Grants No. RS-2023-NR119931, No. RS-2024-00413957 and No. RS-2024-00438415) funded by the Korean Government (MSIT).

\bibliography{apssamp}

\appendix
\setcounter{equation}{0}
\setcounter{exm}{0}
\setcounter{lem}{0}
\setcounter{rst}{0}

\section{Exact TTR dynamics} \label{app:example}

Here we review \cref{ex:XX} in more detail. For the sake of this appendix to be self-contained, we may repeat some content that already appeared in the main text.

Let us recall the example here: consider a dynamics defined by $H_{\texttt{tot}}=gXX$ and an ancilla $\xi\stackrel{s.d}{=}\sum_j p_k \op*{e_k}$, and choose a prior state $\gamma$ satisfying $\comm{\gamma}{X}= 0$ such that $\gamma$ is a steady state. 
Recall that 
\begin{eqnarray}
\begin{aligned}
    &e^{-iXX \theta}(\gamma \ten \xi) e^{iXX \theta} \\
    =& \cos^2{\theta} (\gamma \ten \xi)  + \sin^2{\theta} (\gamma \ten X \xi X)\\
    &+ i(\cos{\theta}\sin{\theta}) X \gamma\ten (X\xi-\xi X),
\end{aligned}
\end{eqnarray} 
where $\theta = g \Delta t$.

First, if $\comm{\xi}{X}= 0$, then the dynamics of interest becomes a thermal operation which is already known to be TTR when $\gamma$ is a steady state. Second, if $\comm{\xi}{Z}= 0$, $\xi$ satisfies  $\Tr(\xi X)=0$. Then it becomes a probabilistic mixture of the identity channel and a unitary $X$:  
\begin{align}
    &\mathcal{N}_{\theta}(\rho) \nonumber\\
    :=&\Tr_{\texttt{E}} \left(e^{-iXX \theta}(\rho \ten \xi) e^{iXX \theta} \right) \\
    =& \Tr_{\texttt{E}}(\cos{\theta}\one-i\sin{\theta}XX)(\rho \ten \xi)(\cos{\theta}\one+i\sin{\theta}XX) \quad\\
    =& (\cos^2{\theta}) \rho + (\sin^2{\theta}) X\rho X + i(\cos{\theta}\sin{\theta})\Tr(X\xi)\comm{\rho}{X}\\
    =& (\cos^2{\theta}) \rho + (\sin^2{\theta}) X\rho X.
    \end{align}

Now, let us consider a non-trivial case when $\comm{\xi}{X}\ne 0$ and $\comm{\xi}{Z}\ne 0$, that is, $U_\theta=e^{-iXX \theta}$ is neither thermal operation nor mixture of unitary operations. $U_\theta$ is product preserving with respect to $\gamma,\xi$ whenever $\theta=\frac{N}{2}\pi$ for any integer $N$. After acting $U_\theta$ on $\gamma \ten \xi$, we get 
\begin{align}
    &e^{-iXX \theta}(\gamma \ten \xi) e^{iXX \theta} \nonumber \\
    =& (\cos{\theta}\one-i\sin{\theta}XX)(\gamma \ten \xi)(\cos{\theta}\one+i\sin{\theta}XX)\\
    =& \gamma \ten (\cos^2{\theta} \xi  + \sin^2{\theta} X \xi X) \nonumber\\
    &\quad + i(\cos{\theta}\sin{\theta}) X \gamma\ten (X\xi-\xi X),
\end{align}    
because $\gamma$ satisfies $\comm{\gamma}{X}=0$. As such, the output state is a product state, when $\theta=\frac{N}{2}\pi$. It appears that the system and the environment becomes correlated otherwise. Nevertheless, we can show $(H_{\texttt{tot}},\xi)$ can be TTR even when $\theta\ne\frac{N}{2}\pi$. To verify this, we compute the transition matrix  $\Phi_{mj\leftarrow nk}$:
\begin{align}
    &\Phi_{mj\leftarrow nk} \nonumber\\
    \equiv& 
    \bra{\lambda_m}_{\texttt{S}}\bra*{e'_j}_{\texttt{E}} U_{\theta} \ket*{\lambda_n}_{\texttt{S}}\ket*{e_k}_{\texttt{E}} \\
    =& \bra{\lambda_m}_{\texttt{S}}\bra*{e'_j}_{\texttt{E}}(\cos{\theta}\one-i\sin{\theta}XX) \ket*{\lambda_n}_{\texttt{S}}\ket*{e_k}_{\texttt{E}}\\
    =& \bra{\lambda_m}\ket*{\lambda_n}\bra*{e'_j}\big(\underbrace{\cos\theta\ket*{e_k}+i(-1)^{n+1}\sin\theta X\ket*{e_{k}}}_{=:\ket*{f_{\theta,n,k}}}\big).
\end{align}

Let us choose $\xi'$ such that its eigenbasis is given by $\ket*{e'_{j}}:=\cos\theta\ket*{e_j}- i\sin\theta X\ket*{e_{j}}=e^{-i\theta X}\ket*{e_{j}}$ and $\Tr[\xi' X]=\Tr[\xi X]$. Note that $\{\ket*{e'_j}\}$ forms a valid set of an orthonormal basis. Then we compute $\bra*{e_j'}\ket*{f_{\theta,n,k}}$ as follows,
\begin{align}
    \bra*{e_j'}\ket*{f_{\theta,n,k}} 
    &= 
    \begin{cases}
        \bra*{e_j}e^{i\theta X}\, e^{-i\theta X} \ket{e_k}, \quad &\text{if} \,\, n=0\\
        \bra*{e_j}e^{i\theta X}\, e^{i\theta X} \ket{e_k}, \quad &\text{if} \,\,n=1.
    \end{cases} \\
    &= 
    \begin{cases}
        \delta_{j,k}, \quad &\text{if} \,\, n=0\\
        \cos{2\theta}\delta_{j,k}+i\sin{2\theta}\bra*{e_j}X\ket*{e_k}, \quad &\text{if} \,\,n=1.
    \end{cases} \nonumber
\end{align}

This leads to the transition matrix given by 
\begin{align}
    \Phi_{mj\leftarrow nk} &= \bra{\lambda_m}\ket*{\lambda_n}\bra*{e'_j}\ket*{f_{\theta,n,k}}\\
    &= 
    \begin{cases}
        \delta_{m,n}\delta_{j,k}, \quad &\text{if} \,\, n=0\\
        \delta_{m,n}\left(\cos{2\theta}\delta_{j,k}+i\sin{2\theta}\bra*{e_j}X\ket*{e_k}\right), \quad &\text{if} \,\,n=1.
    \end{cases} \nonumber
\end{align}

Finally we verify the TTR condition in each of the four cases $(n_1,n_2)\in\{0,1\}^2$.\\

\noindent (a) when $n_1=n_2=0$, the TTR condition is given by 
\begin{align}
        &\sum_{jk} \left(p'_j\sqrt{r'_{m_1}r'_{m_2}}-p_k\sqrt{r_{n_1}r_{n_2}}\right)\Phi_{m_1 j\leftarrow n_1k}\Phi^*_{m_2 j\leftarrow n_2k} \nonumber\\
        =& r_{n_1} \sum_{jk} \left(p'_j-p_k\right)\delta_{j,k}\\
        =& r_{n_1}  \Big(\sum_j p'_j-\sum_k p_k\Big)=0.
\end{align}
(b) when $n_1=n_2=1$, we have 
\begin{align}
    &\sum_{jk} \left(p'_j\sqrt{r'_{m_1}r'_{m_2}}-p_k\sqrt{r_{n_1}r_{n_2}}\right)\Phi_{m_1 j\leftarrow n_1k}\Phi^*_{m_2 j\leftarrow n_2k} \nonumber\\
    =& r_{n_1} \sum_{jk} \left(p'_j-p_k\right)\left(\cos^2{2\theta}\delta_{j,k}+\sin^2{2\theta}\bra*{e_j}X\ket*{e_k}\!\!\bra*{e_k}X\ket*{e_j}\right)\\
    =& r_{n_1} \cos^2{2\theta} \sum_{jk} \left(p'_j-p_k\right)\delta_{j,k} \nonumber\\
    &+ r_{n_1} \sin^2{2\theta} \sum_{jk} \left(p'_j-p_k\right)\bra*{e_j}X\ket*{e_k}\bra*{e_k}X\ket*{e_j}\\
    =& r_{n_1} \cos^2{2\theta} \left(\left(\sum_j p'_j\right)-\left(\sum_k p_k\right)\right) \nonumber\\
    &+ r_{n_1} \sin^2{2\theta} \left(\left(\sum_j p'_j\right)-\left(\sum_k p_k\right)\right)\\
    =& 0
\end{align}
(c) when $n_1=1,n_2=0$, we have 
\begin{align}
        &\sum_{jk} \left(p'_j\sqrt{r'_{m_1}r'_{m_2}}-p_k\sqrt{r_{n_1}r_{n_2}}\right)\Phi_{m_1 j\leftarrow n_1k}\Phi^*_{m_2 j\leftarrow n_2k} \nonumber\\
        =& \sqrt{r_{n_1} r_{n_2}} \sum_{jk} \left(p'_j-p_k\right) \delta_{j,k} \left(\cos{2\theta}\delta_{j,k}+i\sin{2\theta}\bra*{e_j}X\ket*{e_k}\right)\\
        =& \cos{2\theta}\sqrt{r_{n_1} r_{n_2}} \sum_{jk} \left(p'_j-p_k\right) \delta_{j,k} \nonumber \\
        & + i\sin{2\theta}\sqrt{r_{n_1} r_{n_2}} \sum_{jk} \left(p'_j-p_k\right) \delta_{j,k} \bra*{e_j}X\ket*{e_k}\\
        =& \cos{2\theta}\sqrt{r_{n_1} r_{n_2}} \left(\left(\sum_j p'_j\right)-\left(\sum_k p_k\right)\right) \nonumber \\
        +& i\sin{2\theta}\sqrt{r_{n_1} r_{n_2}}  \left(\left(\sum_jp'_j\bra*{e_j}X\ket*{e_j}\right)-\left(\sum_{k}p_k \bra*{e_k}X\ket*{e_k}\right)\right) \\
        =& i\sin{2\theta}\sqrt{r_{n_1} r_{n_2}} \left(\sum_j p'_j
        \Tr\left(\op*{e_j} X \right)
        \right) \nonumber \\
        & - i\sin{2\theta}\sqrt{r_{n_1} r_{n_2}} \left(\sum_{k}p_k \Tr\left(\op*{e_k} X \right)\right) \\
        =& i\sin{2\theta}\sqrt{r_{n_1} r_{n_2}}  \Big(\Tr[\xi'X]-\Tr[\xi X]\Big)\\
        =&0,
\end{align}
where we used $\Tr(\op*{e_j} X )=\Tr(\op*{e'_j} X )$ because $\ket*{e'_j}=e^{-i\theta X}\ket*{e_j}$.\\

\noindent (d) when $n_1=1,n_1=0$, we arrive at the same conclusion similar to the case (c).

\section{First-order approximate TTR} \label{app:first_order}

Let $\rho$ be an arbitrary state. Given $(H_{\texttt{tot}},\xi)$, the Kraus representation of $\N_{\Delta t}$ is given by 
    \begin{align}
        & \quad \N_{\Delta t}(\rho) \nonumber\\
        &= \sum_{j,k} p_k \bra*{e_j}_{\texttt{E}} e^{-i H_{\texttt{tot}}\Delta t}\ket*{e_k}_{\texttt{E}}  \rho \bra*{e_k}_{\texttt{E}} e^{i H_{\texttt{tot}}\Delta t}\ket*{e_j}_{\texttt{E}} \label{eq:app_Kraus} \\
        &=  \sum_{j,k} p_k  \bra*{e_j}_{\texttt{E}} \left( \one -i\Delta t H_{\texttt{tot}} \right)\ket*{e_k}_{\texttt{E}} \rho \bra*{e_k}_{\texttt{E}} \left( \one +i\Delta t H_{\texttt{tot}} \right)\ket*{e_j}_{\texttt{E}} \nonumber\\
        &\quad + \order{(\Delta t)^2} \\
        &= \rho - i\Delta{t}\sum_{k} p_k  \bra*{e_k}_{\texttt{E}} H_{\texttt{tot}} \ket*{e_k}_{\texttt{E}} \rho \nonumber\\
        &\quad +i\Delta{t}\rho\sum_{k} p_k  \bra*{e_k}_{\texttt{E}} H_{\texttt{tot}} \ket*{e_k}_{\texttt{E}}  + \order{(\Delta t)^2}\\
        &= \rho - i\Delta{t}\Tr_{\texttt{E}} \left( H_{\texttt{tot}} ( \one_{\texttt{S}} \otimes \xi_{\texttt{E}}) \right) \rho \nonumber\\
        &\quad +i\Delta{t}\rho\Tr_{\texttt{E}} \left( H_{\texttt{tot}} ( \one_{\texttt{S}} \otimes \xi_{\texttt{E}}) \right)  + \order{(\Delta t)^2}\label{eq:app_first_forward}
    \end{align}

\subsection{Tabletop reverse maps}
Similarly, we expand $\hat{\N}_{\Delta t,\xi'}^{(\texttt{T})}$ up to the first-order of $\Delta t$:
    \begin{align}
        &\hat{\N}_{\Delta t,\xi'}^{(\texttt{T})}(\rho) \nonumber\\
        =& \sum_{j,k} p'_j  \bra*{e_k'}_{\texttt{E}} e^{i H_{\texttt{tot}}\Delta t}\ket*{e_j'}_{\texttt{E}} \rho \bra*{e_j'}_{\texttt{E}} e^{-i H_{\texttt{tot}}\Delta t}\ket*{e_k'}_{\texttt{E}}\\
        =&  \sum_{j,k} p'_j  \bra*{e_k'}_{\texttt{E}} \left( \one +i\Delta t H_{\texttt{tot}} \right)\ket*{e_j'}_{\texttt{E}} \rho \bra*{e_j'}_{\texttt{E}} \left( \one -i\Delta t H_{\texttt{tot}} \right)\ket*{e_k'}_{\texttt{E}} \nonumber\\
        &+ \order{(\Delta t)^2} \\
        =& \rho + i\Delta{t}\sum_{j} p'_j  \bra*{e_j'}_{\texttt{E}}  H_{\texttt{tot}} \ket*{e_j'}_{\texttt{E}} \rho \nonumber\\
        &-i\Delta{t}\rho\sum_{j} p'_j  \bra*{e_j'}_{\texttt{E}} H_{\texttt{tot}} \ket*{e_j'}_{\texttt{E}}  + \order{(\Delta t)^2}\\
        =&\rho+i\Delta t\Tr_{\texttt{E}} \left( H_{\texttt{tot}} ( \one \otimes \xi') \right)\rho\nonumber\\ 
        &-i\Delta t\rho\Tr_{\texttt{E}} \left( H_{\texttt{tot}} (\one \otimes \xi') \right)+ \order{(\Delta t)^2}.
    \end{align}

\subsection{Petz recovery maps}
In turn, the Kraus representation of the Petz recovery map is given by 
    \begin{align}
        \!\!&\hat{{\cal N}}^{(\texttt{P})}_{\Delta t,\gamma}(\rho)=\nonumber\\
        \!\!&\sum_{j,k} p_k \gamma^{\frac{1}{2}} \!\bra*{e_k}_{\texttt{E}} e^{i H_{\texttt{tot}}\Delta t}\ket*{e_j}_{\texttt{E}} \!\gamma_{\Delta t}^{-\frac{1}{2}} \rho \gamma_{\Delta t}^{-\frac{1}{2}} \!\bra*{e_j}_{\texttt{E}} e^{-i H_{\texttt{tot}}\Delta t}\ket*{e_k}_{\texttt{E}} \! \gamma^{\frac{1}{2}}, 
    \end{align}
    where we denote $\gamma_{\Delta t}\equiv\N_{\Delta t}(\gamma)$.  Inserting $\rho:=\gamma$ into Eq.~\eqref{eq:app_first_forward}, we can represent $\gamma_{\Delta t}$ as 
    \begin{align}
        \gamma_{\Delta t}=\gamma-i\Delta t\comm{H}{\gamma}+\order{(\Delta t)^2}, \label{eq:app_gammat}
    \end{align}
    where $H\equiv\Tr_{\texttt{E}} \left( H_{\texttt{tot}} ( \one_{\texttt{S}} \otimes \xi_{\texttt{E}}) \right)$.
    
    Define $W$ as $W:=\gamma_{\Delta t}^{-\frac{1}{2}}$, and let $A$ be such that $\gamma^{-\frac{1}{2}}A\gamma^{-\frac{1}{2}}$ is the first-order term in the power series of $W$ in terms of $\Delta t$, i.e.,
    \begin{align}
        W&:=\gamma_{\Delta t}^{-\frac{1}{2}}\\
        &=\left(\gamma-i\Delta t\comm{H}{\gamma}+\order{(\Delta t)^2}\right)^{-\frac{1}{2}}\\
        &=\gamma^{-\frac{1}{2}} + \Delta t\gamma^{-\frac{1}{2}}A\gamma^{-\frac{1}{2}}+\order{(\Delta t)^2}.
    \end{align}
    
    We then expand $W^{-1}$ in $\Delta t$ and get 
    \begin{align}
    W^{-1} =& \left(\gamma^{-\frac{1}{2}} +\Delta t \gamma^{-\frac{1}{2}} A \gamma^{-\frac{1}{2}} \right)^{-1} + \order{(\Delta t)^2} \\
    =& \left(\one + \Delta t  A \gamma^{-\frac{1}{2}} \right)^{-1}\gamma^{\frac{1}{2}} + \order{(\Delta t)^2} \\
    =& \left(\one - \Delta t  A \gamma^{-\frac{1}{2}} \right)\gamma^{\frac{1}{2}}  + \order{(\Delta t)^2}\\
    =& \gamma^{\frac{1}{2}} -\Delta t A   + \order{(\Delta t)^2}.
\end{align}

Using this, we evaluate $W^{-2}$:
\begin{align}
    W^{-2}=& \left(W^{-1}\right)^2\\
    =& \left(\gamma^{\frac{1}{2}} -\Delta t A\right)^2 + \order{(\Delta t)^2}\\
    =& \gamma - \Delta t \left( \gamma^{\frac{1}{2}}A + A \gamma^{\frac{1}{2}} \right) + \order{(\Delta t)^2}. \label{eq:app_w}
\end{align}

Since $W^{-2}=\gamma_{\Delta t}$, the first-order terms of Eq.~\eqref{eq:app_gammat} and Eq.~\eqref{eq:app_w} must match. From this, we get 
\begin{align}
    \gamma^{\frac{1}{2}}A + A \gamma^{\frac{1}{2}} = -i\comm{H}{\gamma}.
\end{align}

Observe $A := -i\int_0^\infty e^{-x \sqrt{\gamma}} \comm{H}{\gamma} e^{-x \sqrt{\gamma}} dx $ is a solution of the above equation:
\begin{align}
    &\sqrt{\gamma}A + A\sqrt{\gamma}\nonumber\\
    &= -i\int_0^\infty \left(\sqrt{\gamma} e^{-x \sqrt{\gamma}} \comm{H}{\gamma} e^{-x \sqrt{\gamma}}+ e^{-x \sqrt{\gamma}} \comm{H}{\gamma} e^{-x \sqrt{\gamma}} \sqrt{\gamma} \right)dx \\
    &= i\int_0^\infty \left(\frac{d}{dx}e^{-x \sqrt{\gamma}} \comm{H}{\gamma} e^{-x \sqrt{\gamma}}\right)dx\\
    &= i e^{-x \sqrt{\gamma}} \comm{H}{\gamma} e^{-x \sqrt{\gamma}} \eval_{x=0}^{x=\infty}\\
    &=-i\comm{H}{\gamma}.
\end{align}

Combining altogether, we can express $\hat{{\cal N}}^{(\texttt{P})}_{\Delta t,\gamma}(\rho)$ up to first order of $\Delta t$ as
    \begin{align}
        &\hat{{\cal N}}^{(\texttt{P})}_{\Delta t,\gamma}(\rho)\nonumber\\
        =&\rho+i\Delta t\Tr_{\texttt{E}} \left( (\gamma_{\texttt{S}}^{\frac{1}{2}}\ten \one_{\texttt{E}} ) H_{\texttt{tot}} ( \gamma_{\texttt{S}}^{-\frac{1}{2}} \ten \xi_{\texttt{E}}) \right)\rho +\Delta tA\gamma^{-\frac{1}{2}}\rho\nonumber\\
        &-i\Delta t\rho\Tr_{\texttt{E}} \left( (\gamma_{\texttt{S}}^{-\frac{1}{2}}\ten \one_{\texttt{E}} ) H_{\texttt{tot}} ( \gamma_{\texttt{S}}^{\frac{1}{2}} \ten \xi_{\texttt{E}})\right)+\Delta t\rho\gamma^{-\frac{1}{2}}A\nonumber\\
        &+\order{(\Delta t)^2}. 
    \end{align}

    Thus, we conclude that the first order terms of $\hat{{\cal N}}^{(\texttt{P})}_{\Delta t,\gamma}$ and $\hat{\N}_{\Delta t,\xi'_t}^{(\texttt{T})}$ match, if it holds that
    \begin{align*}
    \begin{aligned}
        \!\!&\Tr_{\texttt{E}} \left( (\gamma_{\texttt{S}}^{\frac{1}{2}}\ten \one_{\texttt{E}} ) H_{\texttt{tot}} ( \gamma_{\texttt{S}}^{-\frac{1}{2}} \ten \xi_{\texttt{E}}) \right) - iA\gamma^{-\frac{1}{2}} = \Tr_{\texttt{E}} \left( H_{\texttt{tot}} ( \one_{\texttt{S}} \otimes \xi'_{\texttt{E}}) \right)\\
        \!\!&\Tr_{\texttt{E}} \left( (\gamma_{\texttt{S}}^{-\frac{1}{2}}\ten \one_{\texttt{E}} ) H_{\texttt{tot}} ( \gamma_{\texttt{S}}^{\frac{1}{2}} \ten \xi_{\texttt{E}}) \right) + i\gamma^{-\frac{1}{2}}A = \Tr_{\texttt{E}} \left( H_{\texttt{tot}} ( \one_{\texttt{S}} \otimes \xi'_{\texttt{E}}) \right).
    \end{aligned}
    \end{align*}

If $\gamma$ is a steady state or the maximally mixed state, $A$ vanishes.

\section{Second-order approximate TTR} \label{app:second_order}

We proceed to prove \cref{rst:secondTTR} using the similar technique used in the previous section but here we expand two maps up to the second-order of the evolution time. Let us begin with the second-order approximation of $\N_{\Delta t}$:

\begin{widetext}
    \begin{align}
    \N_{\Delta t}(\rho) =&\Tr_{\texttt{E}}\left(e^{-i H_{\texttt{tot}}{\Delta t} }(\rho_{\texttt{S}} \ten \xi_{\texttt{E}})e^{i H_{\texttt{tot}}{\Delta t}} \right) \\
    =&  \Tr_{\texttt{E}}\left((\one-iH_{\texttt{tot}}{\Delta t}-H_{\texttt{tot}}^2\frac{(\Delta t)^2}{2}+\ldots)(\rho_{\texttt{S}} \ten \xi_{\texttt{E}})(\one+iH_{\texttt{tot}}{\Delta t}-H_{\texttt{tot}}^2\frac{(\Delta t)^2}{2}+\ldots)\right)\\
    =& \Tr_{\texttt{E}}\left(-i{\Delta t}\comm{H_{\texttt{tot}}}{\rho_{\texttt{S}}\ten\xi_{\texttt{E}}}+(\Delta t)^2 H_{\texttt{tot}}(\rho_{\texttt{S}}\ten\xi_{\texttt{E}})H_{\texttt{tot}}- \frac{(\Delta t)^2}{2}\acomm{H_{\texttt{tot}}^2}{\rho_{\texttt{S}}\ten\xi_{\texttt{E}}}\right) + \order{(\Delta t)^3}\\
    =& -i{\Delta t}\Tr_{\texttt{E}}\comm{H_{\texttt{tot}}}{\rho_{\texttt{S}}\ten\xi_{\texttt{E}}}+(\Delta t)^2 \Tr_{\texttt{E}}\left(H_{\texttt{tot}}(\rho_{\texttt{S}}\ten\xi_{\texttt{E}})H_{\texttt{tot}}\right)- \frac{(\Delta t)^2}{2}\Tr_{\texttt{E}}\acomm{H_{\texttt{tot}}^2}{\rho_{\texttt{S}}\ten\xi_{\texttt{E}}}+ \order{(\Delta t)^3}\\
    =& -i{\Delta t}\comm{\Tr_{\texttt{E}}(H_{\texttt{tot}}\xi_{\texttt{E}})}{\rho_{\texttt{S}}}\\
    \quad &+(\Delta t)^2 \sum_{jk} p_k \bra*{f_j}_{\texttt{E}} H_{\texttt{tot}} \ket*{e_k}_{\texttt{E}} \rho_{\texttt{S}} \bra*{e_k}_{\texttt{E}} H_{\texttt{tot}} \ket*{f_j}_{\texttt{E}}
    - \frac{(\Delta t)^2}{2}\sum_{jk}p_k \acomm{ \bra*{e_k}_{\texttt{E}} H_{\texttt{tot}} \ket*{f_j}_{\texttt{E}} \bra*{f_j}_{\texttt{E}} H_{\texttt{tot}} \ket*{e_k}_{\texttt{E}}}{\rho_{\texttt{S}}}+ \order{(\Delta t)^3}\nonumber\\
    =& \Delta t \, \mathcal{H}{\rho} + (\Delta t)^2 \sum_{jk}p_k\mathcal{D}[L_{jk}] \rho + \order{(\Delta t)^3},
\end{align}
\end{widetext}
defining the superoperators 
\begin{align}
    \mathcal{H}\rho \equiv& -i\comm{H}{\rho}, \qand \\ 
    \mathcal{D}[L_{jk}]\rho\equiv& \left( L_{jk} \rho L_{jk}^\dagger - \frac{1}{2}\acomm{L_{jk}^\dagger L_{jk}}{\rho}\right), \label{eq:app_D}
\end{align}
where $H\equiv\Tr_{\texttt{E}} \left( H_{\texttt{tot}} ( \one_{\texttt{S}} \otimes \xi_{\texttt{E}}) \right)$  
and $L_{jk}\equiv  (\one_{\texttt{S}}\ten\bra*{f_j}_{\texttt{E}}){H_{\texttt{tot}}}(\one_{\texttt{S}}\ten\ket{e_k}_{\texttt{E}})$ for any orthonormal basis $\{\ket*{f_j}\}_j$. For our purpose, we choose $\ket*{f_j}=\ket*{e'_j}$ the eigenbasis of $\xi'$. We use $\mathcal{H}$ and $\mathcal{D}$ for notational brevity, but this should not be confused with the components of the Lindbladian of $\N_{\Delta t}$. Recall that we use $\comm{X}{Y}\equiv XY-YX$ to represent the commutator of $X$ and $Y$, and $\acomm{X}{Y}\equiv XY+YX$ to represent the anti-commutator of $X$ and $Y$.

\subsection{Tabletop reverse maps}

Similarly, we get the second-order approximation of the tabletop reverse map $\hat{{\cal N}}^{(\texttt{T})}_{\Delta t,\xi'}$:
\begin{align}
    &\hat{{\cal N}}^{(\texttt{T})}_{\Delta t,\xi'}(\rho) \nonumber \\
    &= \rho +\Delta t\hat{\mathcal{H}}^{(\texttt{T})}_{\xi'}\rho + (\Delta t)^2 \sum_{jk} p'_j \hat{\mathcal{D}}^{(\texttt{T})}_{\xi'}[L_{jk}^\dagger]\rho + \order{(\Delta t)^3},
\end{align}
defining the superoperators
\begin{align}
    \hat{\mathcal{H}}^{(\texttt{T})}_{\xi'}\rho \equiv& - i \comm{H^{(\texttt{T})}_{\xi'}}{\rho}, \qand \\ 
    \hat{\mathcal{D}}^{(\texttt{T})}_{\xi'}[L_{jk}^\dagger]\rho\equiv& \left( L_{jk}^\dagger \rho L_{jk} - \frac{1}{2}\acomm{L_{jk} L_{jk}^\dagger}{\rho}\right), 
\end{align}
where $H^{(\texttt{T})}_{\xi'}\equiv -\Tr_{\texttt{E}} \left( H_{\texttt{tot}} ( \one_{\texttt{S}} \otimes \xi'_{\texttt{E}}) \right)$. 

\subsection{Petz recovery maps}

For the second-order approximation of $\hat{{\cal N}}^{(\texttt{P})}_{\Delta t,\gamma}$, we need two ingredients; ${\cal N}_{\Delta t}^\dagger$ and $W:=\left(\N_{\Delta t}(\gamma)\right)^{-\frac{1}{2}}$.

From the Kraus representation of $\N_{\Delta t}$ in Eq.~\eqref{eq:app_Kraus}, we get 
\begin{align}
    {\cal N}_{\Delta t}^\dagger(\rho) &= \sum_{j,k} p_k  \bra*{e_k}_{\texttt{E}} e^{i H_{\texttt{tot}}\Delta t}\ket*{e_j}_{\texttt{E}} \rho \bra*{e_j}_{\texttt{E}} e^{-i H_{\texttt{tot}}\Delta t}\ket*{e_k}_{\texttt{E}}\\
    &=\rho -{\Delta t} \mathcal{H}\rho
    + (\Delta t)^2 \sum_{jk} p_k \mathcal{D}[L_{jk}^\dagger] \rho + \order{(\Delta t)^3}.
\end{align}

Now, we need to evaluate $W$ up to the second order of $\Delta t$:
\begin{align}
    W&:=\frac{1}{\sqrt{\N_{\Delta t}(\gamma)}}\\
    &=\left(\gamma+ \Delta t \mathcal{H}\gamma + (\Delta t)^2 \sum_{jk}p_k\mathcal{D}[L_{jk}]\gamma + \order{(\Delta t)^3}\right)^{-\frac{1}{2}}
\end{align}

Let $A, B$ such that
\begin{align}
    W &= \gamma^{-\frac{1}{2}} +(\Delta t) \gamma^{-\frac{1}{2}} A \gamma^{-\frac{1}{2}} + (\Delta t)^2\gamma^{-\frac{1}{2}} B \gamma^{-\frac{1}{2}} + \order{(\Delta t)^3} \\
    &= \gamma^{-\frac{1}{2}}\left(\one +(\Delta t)  A \gamma^{-\frac{1}{2}} + (\Delta t)^2 B \gamma^{-\frac{1}{2}}\right) + \order{(\Delta t)^3}
\end{align}

We first evaluate $W^{-1}$:
\begin{align}
    &W^{-1} \nonumber\\
    =& \left(\one +(\Delta t)  A \gamma^{-\frac{1}{2}} + (\Delta t)^2 B \gamma^{-\frac{1}{2}}\right)^{-1}\gamma^{\frac{1}{2}} + \order{(\Delta t)^3} \\
    =& \left(\one -(\Delta t)  A \gamma^{-\frac{1}{2}} - (\Delta t)^2 B \gamma^{-\frac{1}{2}} + (\Delta t)^2 A\gamma^{-\frac{1}{2}}A\gamma^{-\frac{1}{2}}\right)\gamma^{\frac{1}{2}} \nonumber \\
    &\quad + \order{(\Delta t)^3}\\
    =& \gamma^{\frac{1}{2}} -(\Delta t) A  + (\Delta t)^2 \left(A\gamma^{-\frac{1}{2}}A -B\right) + \order{(\Delta t)^3}.
\end{align}

Then, we obtain 
\begin{align}
    &W^{-2} \nonumber \\
    =& \left(W^{-1}\right)^2\\
    =& \left(\gamma^{\frac{1}{2}} -(\Delta t) A  + (\Delta t)^2 \left(A\gamma^{-\frac{1}{2}}A -B\right)\right)^2 + \order{(\Delta t)^3}\\
    =& \gamma -(\Delta t)\left( \gamma^{\frac{1}{2}}A + A\gamma^{\frac{1}{2}} \right) \nonumber \\
    &\quad +(\Delta t)^2 \left( A^2 + \gamma^{\frac{1}{2}}A\gamma^{-\frac{1}{2}}A + A\gamma^{-\frac{1}{2}}A\gamma^{\frac{1}{2}} -\gamma^{\frac{1}{2}}B - B\gamma^{\frac{1}{2}} \right).
\end{align}

In \cref{app:first_order}, we have found 
\begin{align}
    A = -i\int_0^\infty e^{-x \sqrt{\gamma}} \comm{H}{\gamma} e^{-x \sqrt{\gamma}} dx.
\end{align}
Similarly, from the fact that the second-order terms of $W^{-2}$ and $\N_{\Delta t}(\gamma)$ must match since $W^{-2}=\N_{\Delta t}(\gamma)$ by definition, we have a Lyapunov equation (special case of a Sylvester equation) 

\begin{align}
\begin{aligned}
    \gamma^{\frac{1}{2}}B + B\gamma^{\frac{1}{2}} &= A^2 + \gamma^{\frac{1}{2}}A\gamma^{-\frac{1}{2}}A + A\gamma^{-\frac{1}{2}}A\gamma^{\frac{1}{2}}-\sum_{jk}p_k\mathcal{D}[L_{jk}] \gamma\\
    &=:C, \label{eq:app_B}
\end{aligned}
\end{align}
and get a solution
\begin{align}
    B = \int_{0}^{\infty} e^{-x\sqrt{\gamma}} C e^{-x\sqrt{\gamma}} dx.
\end{align}

Having these, we get
\begin{align}
    &\frac{1}{\sqrt{\N_{\Delta t}(\gamma)}} \rho \frac{1}{\sqrt{\N_{\Delta t}(\gamma)}}\nonumber\\
    &=\left(\gamma^{-\frac{1}{2}} +(\Delta t) \gamma^{-\frac{1}{2}} A \gamma^{-\frac{1}{2}} + (\Delta t)^2\gamma^{-\frac{1}{2}} B \gamma^{-\frac{1}{2}}\right)\rho \nonumber\\
    &\quad\quad\left(\gamma^{-\frac{1}{2}} +(\Delta t) \gamma^{-\frac{1}{2}} A \gamma^{-\frac{1}{2}} + (\Delta t)^2\gamma^{-\frac{1}{2}} B \gamma^{-\frac{1}{2}}\right) + \order{(\Delta t)^3}\\
    &= \gamma^{-\frac{1}{2}}\rho\gamma^{-\frac{1}{2}} + (\Delta t) \left(\gamma^{-\frac{1}{2}}\rho \gamma^{-\frac{1}{2}} A \gamma^{-\frac{1}{2}} + \gamma^{-\frac{1}{2}} A \gamma^{-\frac{1}{2}} \rho\gamma^{-\frac{1}{2}} \right) \nonumber\\
    &\qquad + (\Delta t)^2 \gamma^{-\frac{1}{2}} A \gamma^{-\frac{1}{2}} \rho \gamma^{-\frac{1}{2}} A \gamma^{-\frac{1}{2}} \nonumber\\
    &\qquad + (\Delta t)^2 \left(\gamma^{-\frac{1}{2}} \rho \gamma^{-\frac{1}{2}} B \gamma^{-\frac{1}{2}} + \gamma^{-\frac{1}{2}} B \gamma^{-\frac{1}{2}}\rho\gamma^{-\frac{1}{2}}\right) + \order{(\Delta t)^3}.
\end{align}

Then, it follows that 
\begin{align}
    &\N_{\Delta t}^\dagger\left( \frac{1}{\sqrt{\N_{\Delta t}(\gamma)}} \rho \frac{1}{\sqrt{\N_{\Delta t}(\gamma)}}\right) \nonumber \\
    &=\gamma^{-\frac{1}{2}}\rho\gamma^{-\frac{1}{2}} - {\Delta t} \mathcal{H}\gamma^{-\frac{1}{2}}\rho\gamma^{-\frac{1}{2}} \\
    &\quad + (\Delta t)^2 \sum_{jk}p_k\mathcal{D}[L_{jk}^\dagger] \gamma^{-\frac{1}{2}}\rho\gamma^{-\frac{1}{2}} \nonumber \\
    &\quad + (\Delta t)\left(\gamma^{-\frac{1}{2}}\rho \gamma^{-\frac{1}{2}} A \gamma^{-\frac{1}{2}} + \gamma^{-\frac{1}{2}} A \gamma^{-\frac{1}{2}} \rho\gamma^{-\frac{1}{2}}\right) \nonumber \\
    &\quad - (\Delta t)^2 \mathcal{H}\gamma^{-\frac{1}{2}}\rho \gamma^{-\frac{1}{2}} A \gamma^{-\frac{1}{2}} - (\Delta t)^2 \mathcal{H}\gamma^{-\frac{1}{2}} A \gamma^{-\frac{1}{2}} \rho\gamma^{-\frac{1}{2}} \nonumber \\
    &\quad + (\Delta t)^2 \gamma^{-\frac{1}{2}} A \gamma^{-\frac{1}{2}} \rho \gamma^{-\frac{1}{2}} A \gamma^{-\frac{1}{2}} \nonumber \\
    &\quad + (\Delta t)^2 \left(\gamma^{-\frac{1}{2}} \rho \gamma^{-\frac{1}{2}} B \gamma^{-\frac{1}{2}} + \gamma^{-\frac{1}{2}} B \gamma^{-\frac{1}{2}}\rho\gamma^{-\frac{1}{2}}\right) + \order{(\Delta t)^3}. \nonumber
\end{align}

Combining altogether, we obtain
\begin{align}
    &\hat{{\cal N}}^{(\texttt{P})}_{\Delta t,\gamma}(\rho) \equiv \sqrt{\gamma}\N_{\Delta t}^\dagger\left( \frac{1}{\sqrt{\N_{\Delta t}(\gamma)}} \rho \frac{1}{\sqrt{\N_{\Delta t}(\gamma)}}\right)\sqrt{\gamma} \\
    &=\rho - {\Delta t} \gamma^{\frac{1}{2}}\left(\mathcal{H} \gamma^{-\frac{1}{2}}\rho\gamma^{-\frac{1}{2}}\right)\gamma^{\frac{1}{2}} \\
    &+ (\Delta t)^2 \sum_{jk}p_k\gamma^{\frac{1}{2}}\left(\mathcal{D}[L_{jk}^\dagger] \gamma^{-\frac{1}{2}}\rho\gamma^{-\frac{1}{2}}\right)\gamma^{\frac{1}{2}} \nonumber \\
    &+ (\Delta t)\left(\rho \gamma^{-\frac{1}{2}} A  +  A \gamma^{-\frac{1}{2}} \rho \right) \nonumber\\
    &- (\Delta t)^2 \gamma^{\frac{1}{2}}\left(\mathcal{H} \gamma^{-\frac{1}{2}}\rho \gamma^{-\frac{1}{2}} A \gamma^{-\frac{1}{2}}\right)\gamma^{\frac{1}{2}} \nonumber\\
    &- (\Delta t)^2 \gamma^{\frac{1}{2}}\left(\mathcal{H} \gamma^{-\frac{1}{2}} A \gamma^{-\frac{1}{2}} \rho\gamma^{-\frac{1}{2}}\right)\gamma^{\frac{1}{2}}  \nonumber\\
    &+ (\Delta t)^2  A \gamma^{-\frac{1}{2}} \rho \gamma^{-\frac{1}{2}} A  + (\Delta t)^2 \left( \rho \gamma^{-\frac{1}{2}} B  +  B \gamma^{-\frac{1}{2}}\rho\right) + \order{(\Delta t)^3}. \nonumber
\end{align}

The second-order matching of two maps then gives the following condition 
\begin{align}
    &\sum_{jk} p_k \left( \gamma^{\frac{1}{2}}L_{jk}^\dagger\gamma^{-\frac{1}{2}} \rho \gamma^{-\frac{1}{2}}L_{jk}\gamma^{\frac{1}{2}}\right) \nonumber\\
    & -\frac{1}{2}\sum_{jk} p_k \left( \gamma^{\frac{1}{2}}L_{jk} L_{jk}^\dagger\gamma^{-\frac{1}{2}}\rho + \rho \gamma^{-\frac{1}{2}} L_{jk} L_{jk}^\dagger\gamma^{\frac{1}{2}}\right)
    \nonumber\\ 
    &-\gamma^{\frac{1}{2}}\left(\mathcal{H} \gamma^{-\frac{1}{2}}\rho \gamma^{-\frac{1}{2}} A \gamma^{-\frac{1}{2}}\right)\gamma^{\frac{1}{2}} \nonumber\\
    &-\gamma^{\frac{1}{2}}\left(\mathcal{H} \gamma^{-\frac{1}{2}} A \gamma^{-\frac{1}{2}} \rho\gamma^{-\frac{1}{2}}\right)\gamma^{\frac{1}{2}} \nonumber \\
    &+ A \gamma^{-\frac{1}{2}} \rho \gamma^{-\frac{1}{2}} A  +  \left( \rho \gamma^{-\frac{1}{2}} B  +  B \gamma^{-\frac{1}{2}}\rho\right),\nonumber\\
    &\quad =\sum_{jk} p'_j \left( L_{jk}^\dagger \rho L_{jk} - \frac{1}{2}L_{jk} L_{jk}^\dagger\rho - \frac{1}{2}\rho L_{jk} L_{jk}^\dagger\right)\label{eq:app_secondcon}
\end{align}
for any $\rho$.

If $\gamma=\one/d$ is the maximally mixed state, then $A$ vanishes and from Eq.~\eqref{eq:app_B} we get 
\begin{align}
    &\gamma^{-\frac{1}{2}}B = B\gamma^{-\frac{1}{2}} = \sqrt{d} B\\
    &\sqrt{d} B = -\frac{1}{2} \sum_{jk} p_k \left(L_{jk}L_{jk}^\dagger - L_{jk}^\dagger L_{jk}\right).
\end{align}
Together with $L_{jk}\equiv  (\one_{\texttt{S}}\ten\bra*{e'_j}_{\texttt{E}}){H_{\texttt{tot}}}(\one_{\texttt{S}}\ten\ket{e_k}_{\texttt{E}})$, the condition \eqref{eq:app_secondcon} reduces to 
\begin{align}
\begin{aligned}
    &\Tr_{\texttt{E}}\left(  H_{\texttt{tot}}(\rho_{\texttt{S}}\ten \one_{\texttt{E}} )H_{\texttt{tot}} (\one_{\texttt{S}}\ten \xi_{\texttt{E}}) \right)\\
    &\quad  -\frac{1}{2}\Tr_{\texttt{E}}\left(  H_{\texttt{tot}}^2 (\one_{\texttt{S}} \ten \xi_{\texttt{E}})  \right)\rho\\
    &\quad  -\frac{1}{2}\rho\Tr_{\texttt{E}}\left(  H_{\texttt{tot}}^2 (\one_{\texttt{S}} \ten \xi_{\texttt{E}})  \right)\\
    &\quad -\frac{1}{2} \rho \left(\Tr_{\texttt{E}}\left(H_{\texttt{tot}}(\one_{\texttt{S}}\ten \xi_{\texttt{E}} )H_{\texttt{tot}}\right) \right) 
    -\frac{1}{2} \left(\Tr_{\texttt{E}}\left(H_{\texttt{tot}}^2(\one_{\texttt{S}}\ten \xi_{\texttt{E}} )\right) \right)\rho 
    \\
    &= \Tr_{\texttt{E}}\left(  H_{\texttt{tot}}(\rho_{\texttt{S}}\ten \xi'_{\texttt{E}} )H_{\texttt{tot}} \right)\\
    &\quad  -\frac{1}{2}\Tr_{\texttt{E}}\left(  H_{\texttt{tot}}^2 (\one_{\texttt{S}} \ten \xi'_{\texttt{E}}) \right)\rho\\
    &\quad  -\frac{1}{2}\rho\Tr_{\texttt{E}}\left(  H_{\texttt{tot}}^2 (\one_{\texttt{S}} \ten \xi'_{\texttt{E}}) \right).
\end{aligned}
\end{align}

If $\gamma$ is a steady state, then both $A$ and $B$ vanish, thus the condition \eqref{eq:app_secondcon} simplifies to 
\begin{align}
\begin{aligned}
    &\Tr_{\texttt{E}}\left( (\gamma^{\frac{1}{2}}_{\texttt{S}}\ten \one_{\texttt{E}} ) H_{\texttt{tot}}(\gamma^{-\frac{1}{2}}_{\texttt{S}}\rho_{\texttt{S}}\gamma^{-\frac{1}{2}}_{\texttt{S}}\ten \one_{\texttt{E}} )H_{\texttt{tot}} (\gamma^{\frac{1}{2}}_{\texttt{S}}\ten \xi_{\texttt{E}}) \right)\\
    &\quad  -\frac{1}{2}\Tr_{\texttt{E}}\left( \gamma^{\frac{1}{2}}_{\texttt{S}} H_{\texttt{tot}}^2 (\one_{\texttt{S}} \ten \xi_{\texttt{E}}) \gamma^{-\frac{1}{2}}_{\texttt{S}} \right)\rho\\
    &\quad  -\frac{1}{2}\rho\Tr_{\texttt{E}}\left( \gamma^{-\frac{1}{2}}_{\texttt{S}} H_{\texttt{tot}}^2 (\one_{\texttt{S}} \ten \xi_{\texttt{E}}) \gamma^{\frac{1}{2}}_{\texttt{S}} \right)\\
    &= \Tr_{\texttt{E}}\left(  H_{\texttt{tot}}(\rho_{\texttt{S}}\ten \xi'_{\texttt{E}} )H_{\texttt{tot}} \right)\\
    &\quad  -\frac{1}{2}\Tr_{\texttt{E}}\left(  H_{\texttt{tot}}^2 (\one_{\texttt{S}} \ten \xi'_{\texttt{E}}) \right)\rho\\
    &\quad  -\frac{1}{2}\rho\Tr_{\texttt{E}}\left(  H_{\texttt{tot}}^2 (\one_{\texttt{S}} \ten \xi'_{\texttt{E}}) \right).
\end{aligned}
\end{align}
If a steady state $\gamma$ satisfies $\comm{\gamma_{\texttt{S}}}{H_{\texttt{tot}}}=0$, we get the further simplified condition 
\begin{align}
\begin{aligned}
    &\Tr_{\texttt{E}}\left( H_{\texttt{tot}}(\rho_{\texttt{S}}\ten \one_{\texttt{E}} )H_{\texttt{tot}} (\one_{\texttt{S}}\ten \xi_{\texttt{E}}) \right)\\
    &\quad  -\frac{1}{2}\Tr_{\texttt{E}}\left(  H_{\texttt{tot}}^2 (\one_{\texttt{S}} \ten \xi_{\texttt{E}})  \right)\rho\\
    &\quad  -\frac{1}{2}\rho\Tr_{\texttt{E}}\left(  H_{\texttt{tot}}^2 (\one_{\texttt{S}} \ten \xi_{\texttt{E}})  \right)\\
    &= \Tr_{\texttt{E}}\left(  H_{\texttt{tot}}(\rho_{\texttt{S}}\ten \xi'_{\texttt{E}} )H_{\texttt{tot}} \right)\\
    &\quad  -\frac{1}{2}\Tr_{\texttt{E}}\left(  H_{\texttt{tot}}^2 (\one_{\texttt{S}} \ten \xi'_{\texttt{E}}) \right)\rho\\
    &\quad  -\frac{1}{2}\rho\Tr_{\texttt{E}}\left(  H_{\texttt{tot}}^2 (\one_{\texttt{S}} \ten \xi'_{\texttt{E}}) \right).
\end{aligned}
\end{align}

\section{Lindbladian approximate TTR}
\subsection{Proof of \cref{lem:Petz_Lindblad}} \label{app:proof_lem}

Consider a quantum dynamics ${\cal N}_{\Delta t} = e^{\Delta t {\cal L}}$ for a time $\Delta t$, given by the Lindbladian 
\begin{equation}
    {\cal L}\rho = - i [H, \rho] + \sum_\mu \mathcal{D}[L_\mu]\rho.
\end{equation}
Following Ref.~\cite{kwon2022reversing}, we compute a Lindbladian $\hat{{\cal L}}^{(\texttt{P})}_{\gamma_t}$
\begin{equation}
    \hat{{\cal L}}_{\gamma_t}^{(\texttt{P})}(\bullet) = - i [H^{(\texttt{P})}_{\gamma_t}, \bullet] + \sum_\mu \mathcal{D}[L_\mu^{(\texttt{P})}(\gamma_t)](\bullet),
\end{equation}
where 
\begin{align}
    &H^{(\texttt{P})}_{\gamma_t} = -H + H_{\texttt{C}}\\
    &L_\mu^{(\texttt{P})}(\gamma_t) = \gamma_t^{1/2} L_{\mu}^\dagger \gamma_t^{-1/2},
\end{align}
with the correction Hamiltonian $H_{\texttt{C}}=H_{\texttt{C},\gamma_t}$ defined as in Eq.~\eqref{eq:hc}.

Then, the diamond distance between the Petz recovery map and the Lindbladian dynamics from $\hat{{\cal L}}_{\gamma_t}^{(\texttt{P})}$ is given as 
    \begin{equation}
        \frac{1}{2} \left \| \hat{{\cal N}}^{(\texttt{P})}_{\Delta t, \gamma_t} - e^{ \Delta t \hat{{\cal L}}^{(\texttt{P})}_{\gamma_t}} \right\|_\diamond = {\cal O}\left( (\Delta t)^2 \right).    
    \end{equation}

\begin{proof}
Let $\rho_t$ be an arbitrary state. We note that the Petz recovery map $\hat{{\cal N}}^{(\texttt{P})}_{\Delta t, \gamma_t}$ is defined as
    \begin{equation}
        \hat{{\cal N}}^{(\texttt{P})}_{\Delta t, \gamma_t}(\rho_t) \equiv \gamma_t^{\frac{1}{2}} e^{\Delta t{\cal L}^\dagger} \left( \gamma_{t+\Delta t}^{-\frac{1}{2}} \rho_t \gamma_{t+\Delta t}^{-\frac{1}{2}} \right) \gamma_t^{\frac{1}{2}},
    \end{equation}
where $\gamma_{t+\Delta t} = \N_{\Delta t}(\gamma_t)=e^{\Delta t {\cal L}} (\gamma_t)$. We express each term up to the first order of $\Delta t$ as
\begin{align}
    e^{\Delta t{\cal L}^\dagger} &= \one + \Delta t {\cal L}^\dagger + {\cal O}( (\Delta t)^2 )\\
    (\gamma_{t+\Delta t})^{-\frac{1}{2}} &=
    \left( \gamma_t + \Delta t {\cal L}\gamma_t + + {\cal O}( (\Delta t)^2 ) \right)^{-\frac{1}{2}}\\
    &= \gamma_t^{-\frac{1}{2}} - \Delta t \gamma_t^{-\frac{1}{2}} A \gamma_t^{-\frac{1}{2}} + {\cal O}( (\Delta t)^2 ),
\end{align}
where $A = \int_0^\infty e^{-x \sqrt{\gamma_t}} ({\cal L}\gamma_t) e^{-x \sqrt{\gamma_t}} dx $. This leads to
\begin{align}
    \hat{{\cal N}}^{(\texttt{P})}_{\Delta t, \gamma_t}(\rho_t) 
    & = \rho_t + \Delta t  \left[ i \gamma_t^{\frac{1}{2}} H \gamma_t^{-\frac{1}{2}} \rho_t - i  \rho_t \gamma_t^{-\frac{1}{2}} H \gamma_t^{\frac{1}{2}} \right. \nonumber\\
    &\qquad \left. + \sum_\mu (\gamma_t^{\frac{1}{2}}L^\dagger_\mu \gamma_t^{-\frac{1}{2}} ) \rho_t (\gamma_t^{-\frac{1}{2}} L_\mu \gamma_t^{\frac{1}{2}} ) \right. \nonumber\\
    &\qquad  \left. -\frac{1}{2} \sum_\mu \left( \gamma_t^{\frac{1}{2}} L^\dagger_\mu L_\mu \gamma_t^{-\frac{1}{2}} \rho_t + \rho_t \gamma_t^{-\frac{1}{2}} L^\dagger_\mu L_\mu \gamma_t^{\frac{1}{2}}\right) \right. \nonumber\\
    &\qquad  \left. -\left( A \gamma_t^{-\frac{1}{2}}  \rho_t + \rho_t \gamma_t^{-\frac{1}{2}} A \right)  \right] \nonumber\\
    &\qquad + {\cal O}\left( (\Delta t)^2 \right).
\end{align}

Meanwhile, the Lindbladian dynamics from $\hat{{\cal L}}_{\gamma_t}^{(\texttt{P})}$ becomes
    \begin{align}
        &e^{ \Delta t \hat{{\cal L}}^{(\texttt{P})}_{\gamma_t}}(\rho_t) \\
        &= \left(\one + \Delta t \hat{{\cal L}}^{(\texttt{P})}_{\gamma_t}\right)\rho_t + {\cal O}( (\Delta t)^2 ) \\
        &= \rho_t + \Delta t \bigg[-i (-H + H_{\texttt{C}}) \rho_t + i \rho_t (-H + H_{\texttt{C}}) \nonumber\\
        &\qquad + \left. \sum_\mu (\gamma_t^{\frac{1}{2}}L^\dagger_\mu \gamma_t^{-\frac{1}{2}} ) \rho_t (\gamma_t^{-\frac{1}{2}}L_\mu \gamma_t^{\frac{1}{2}} ) \right.\nonumber\\
        &\qquad - \frac{1}{2} \sum_\mu \left( \gamma_t^{-\frac{1}{2}} L_\mu \gamma_t L^\dagger_\mu \gamma_t^{-\frac{1}{2}} \rho_t + \rho_t \gamma_t^{-\frac{1}{2}} L_\mu \gamma_t L^\dagger_\mu \gamma_t^{-\frac{1}{2}} \right) \bigg] \nonumber\\
        &\qquad + {\cal O}( (\Delta t)^2 ). 
    \end{align}
We then have
\begin{equation}
    \left(\hat{{\cal N}}^{(\texttt{P})}_{\Delta t, \gamma_t} - e^{ \Delta t \hat{{\cal L}}^{(\texttt{P})}_{\gamma_t}}\right)(\rho_t) = \Delta t  \left( B \rho_t + \rho_t B^\dagger \right) + {\cal O}\left( (\Delta t)^2 \right),
\end{equation}
where
\begin{equation}
\begin{aligned}
        B &:=  i \gamma_t^{\frac{1}{2}} H \gamma_t^{-\frac{1}{2}} -\frac{1}{2} \gamma_t^{\frac{1}{2}} \left( \sum_\mu L^\dagger_\mu L_\mu \right) \gamma_t^{-\frac{1}{2}} - A \gamma_t^{-\frac{1}{2}}\\
        & \quad -i H + i H_{\texttt{C}} + \frac{1}{2} \sum_\mu \gamma_t^{-\frac{1}{2}} L_\mu \gamma_t L^\dagger_\mu \gamma_t^{-\frac{1}{2}}.
\end{aligned}
\end{equation}
To complete the proof, we show that $B=0$. To this end, we first evaluate $A$ as follows:
\begin{align}
    A &= \int_0^\infty e^{-x \sqrt{\gamma_t}} ({\cal L}\gamma_t) e^{-x \sqrt{\gamma_t}} dx  \\
    & = \int_0^\infty \sum_{\lambda_t, \lambda_t'} e^{-x\left(\sqrt{\lambda_t} + \sqrt{\lambda_t'} \right)} \bra{\lambda_t} ({{\cal L}} \gamma_t) \ket*{\lambda_t'}  \ket{\lambda_t}\bra*{\lambda_t'} \\
    & = \sum_{\lambda_t, \lambda_t'} \frac{\bra{\lambda_t}
    ({{\cal L}}\gamma_t) \ket*{\lambda'_t} }{\sqrt{\lambda_t} + \sqrt{\lambda_t'}} \ket{\lambda_t}\bra*{\lambda_t'} \\
    &= -i \sum_{\lambda_t, \lambda_t'} \frac{\bra{\lambda_t}
    [H, \gamma_t] \ket*{\lambda'_t} }{\sqrt{\lambda_t} + \sqrt{\lambda_t'}} \ket{\lambda_t}\bra*{\lambda_t'} \nonumber\\
    &\quad + \sum_{\lambda_t, \lambda_t'} \sum_\mu \frac{\bra{\lambda_t}
    L_\mu \gamma_t L^\dagger_\mu \ket*{\lambda'_t} }{\sqrt{\lambda_t} + \sqrt{\lambda_t'}} \ket{\lambda_t}\bra*{\lambda_t'} \nonumber\\
    &\quad - \frac{1}{2} \sum_{\lambda_t, \lambda_t'} \sum_\mu  \frac{ (\lambda_t + \lambda_t') \bra{\lambda_t} L^\dagger_\mu L_\mu \ket*{\lambda'_t} }{\sqrt{\lambda_t} + \sqrt{\lambda_t'}} \ket{\lambda_t}\bra*{\lambda_t'},
\end{align}
where we used $\gamma_t \stackrel{s.d.}{=} \sum_{\lambda_t} \op*{\lambda_t}$.

By evaluating all the terms in the eigenbasis of $\gamma_t$, we obtain
\begin{equation}
\begin{aligned}
    &\bra {\lambda_t} B \ket*{\lambda_t'} = \\
    &i \left( \sqrt{\frac{\lambda_t}{\lambda_t'}} - \frac{\lambda_t - \lambda_t'}{(\sqrt{\lambda_t} + \sqrt{\lambda_t'})\sqrt{\lambda_t'}} - 1\right) \bra{\lambda_t} H \ket*{\lambda'_t} \\
    +&\frac{1}{2} \sum_\mu  \left( - \sqrt{\frac{\lambda_t}{\lambda_t'}} + \frac{\lambda_t + \lambda_t'}{\sqrt{\lambda_t'}(\sqrt{\lambda_t} + \sqrt{\lambda_t'})} +  \frac{\sqrt{\lambda_t} - \sqrt{\lambda_t'}}{\sqrt{\lambda_t} + \sqrt{\lambda_t'}} \right) \bra{\lambda_t} L^\dagger_\mu L_\mu \ket*{\lambda'_t}  \\
    +& \sum_\mu  \left( - \frac{\sqrt{\lambda_t} }{\sqrt{\lambda_t} + \sqrt{\lambda_t'}} +  \frac{1}{2} \frac{\sqrt{\lambda_t} - \sqrt{\lambda_t'}}{\sqrt{\lambda_t} + \sqrt{\lambda_t'}} + \frac{1}{2} \right) \frac{\bra{\lambda_t} L_\mu \gamma_t L^\dagger_\mu \ket*{\lambda'_t}}{\sqrt{\lambda_t} \sqrt{\lambda_t'}}    \\
    =& 0,
\end{aligned}
\end{equation}
for all elements of $\bra{\lambda_t} H \ket*{\lambda'_t}$, $\bra{\lambda_t} \sum_\mu L^\dagger_\mu L_\mu \ket*{\lambda'_t}$, and $\bra{\lambda_t} \sum_\mu L_\mu \gamma_t L^\dagger_\mu \ket*{\lambda'_t}$. 

Consequently, as $\left(\hat{{\cal N}}^{(\texttt{P})}_{\Delta t, \gamma_t} - e^{ \Delta t \hat{{\cal L}}^{(\texttt{P})}_{\gamma_t}}\right)(\rho_t) =  {\cal O}\left( (\Delta t)^2 \right)$ for any $\rho_t$, the diamond norm also scales as
    \begin{align}
        &\frac{1}{2} \left\| \hat{{\cal N}}^{(\texttt{P})}_{\Delta t, \gamma_t} - e^{ \Delta t \hat{{\cal L}}^{(\texttt{P})}_{\gamma_t}} \right\|_\diamond \nonumber\\
        &\quad \equiv \frac{1}{2} \max_\sigma \left\|\left(id \otimes \hat{{\cal N}}^{(\texttt{P})}_{\Delta t, \gamma_t} - id \otimes e^{ \Delta t \hat{{\cal L}}^{(\texttt{P})}_{\gamma_t}} \right)(\sigma) \right\|_1 \\
        &\quad = {\cal O}((\Delta t)^2),
    \end{align}
which completes the proof. 
\end{proof}

\subsection{Proof of \cref{rst:Lindblad}} \label{app:Lindblad_proof}

To prove \cref{rst:Lindblad}, we show that (i) 
\begin{align}
        \frac{1}{2}\norm{\hat{{\cal N}}^{(\texttt{P})}_{\Delta t,\gamma_t} - e^{\Delta t \hat{\mathcal{L}}^{(\texttt{T})}_{\xi'}}}_{\diamond}=\order{(\Delta t)^2}, \label{eq:app_pp}
\end{align}
if 

\begin{align}
        &
    H_{\texttt{Ls}}(\xi) - H_{\texttt{C}} = H_{\texttt{Ls}}(\xi') + \order{\Delta t} 
    \label{eq:app_lindblad_cond1}\\
    &\sum_{jk}p_k\mathcal{D}[\gamma_t^{1/2}\bra*{e_k}_{\texttt{E}}H_{\texttt{tot}}\ket*{e'_j}_{\texttt{E}}\gamma_t^{-1/2}] \nonumber\\
    &\qquad =\sum_{jk}p'_j\mathcal{D}[\bra*{e_k}_{\texttt{E}}H_{\texttt{tot}}\ket*{e'_j}_{\texttt{E}}] + \order{\Delta t}, 
    \label{eq:app_lindblad_cond2}
\end{align}
for some real number $\alpha$, and (ii) $H_{\texttt{C}}$ vanishes if 
\begin{eqnarray}
    \begin{aligned}
        \gamma_t^{1/2}\bra*{e_k}_{\texttt{E}}{H_{\texttt{tot}}}\ket*{e'_j}_{\texttt{E}}\gamma_t^{-1/2}=c_{jk}\bra*{e_k}_{\texttt{E}}{H_{\texttt{tot}}}\ket*{e'_j}_{\texttt{E}}, \label{eq:Baker}
    \end{aligned}
    \end{eqnarray}
    for some number $c_{jk}$. The correction Hamiltonian $H_{\texttt{C}}=H_{\texttt{C},\gamma_t}$ is defined as in Eq.~\eqref{eq:hc}.\\

\noindent (i) Once we notice that Eq.~\eqref{eq:app_lindblad_cond1} and Eq.~\eqref{eq:app_lindblad_cond2} imply  $\hat{\mathcal{L}}^{(\texttt{P})}_{\gamma}=\hat{\mathcal{L}}^{(\texttt{T})}_{\xi'}+ \order{\Delta t}$, it is immediate from \cref{lem:Petz_Lindblad} that Eq.~\eqref{eq:app_pp} holds.\\

\noindent (ii) Suppose 
\begin{align}
    \gamma_t^{1/2}L_{jk}^\dagger\gamma_t^{-1/2}=c_{jk}L_{jk}^\dagger,
\end{align}
where we denote $L_{jk}^\dagger= \bra*{e_k}_{\texttt{E}}{H_{\texttt{tot}}}\ket*{e'_j}_{\texttt{E}}$. This implies
\begin{align}
    \comm{\ln\gamma_t}{L_{jk}^\dagger}=(\ln{c_{jk}})L_{jk}^\dagger.
\end{align}

Consequently, we also have $\comm{\ln\gamma_t}{L_{jk}}=(-\ln{c_{jk}})L_{jk}$. They collectively imply that 
\begin{align}
    \begin{aligned}
        &\comm{\ln\gamma_t}{L_{jk}\+L_{jk}} \\
        &= (\ln\gamma_t L_{jk}\+)L_{jk} - L_{jk}\+(L_{jk} \ln\gamma_t) \\
        &= \left(\comm{\ln\gamma_t}{L_{jk}\+}+L_{jk}\+\ln\gamma_t\right)L_{jk} - L_{jk}\+\left(\ln\gamma_t L_{jk} - \comm{\ln\gamma_t}{L_{jk}}\right)\\
        &= \left((\ln c_{jk}) L_{jk}\++L_{jk}\+\ln\gamma_t\right)L_{jk} - L_{jk}\+\left(\ln\gamma_t L_{jk} + (\ln c_{jk})L_{jk}\right)\\
        &=0.
    \end{aligned}
    \end{align}

We also evaluate $\comm{\ln\gamma_t}{\gamma_t^{-\frac{1}{2}}L_{jk}\gamma_t^{\frac{1}{2}}}$ as follows,
\begin{align}
    \comm{\ln\gamma_t}{\gamma_t^{-\frac{1}{2}}L_{jk}\gamma_t^{\frac{1}{2}}}&=\gamma_t^{-\frac{1}{2}}\comm{\ln\gamma_t}{L_{jk}}\gamma_t^{\frac{1}{2}}\\
    &=(-\ln{c_{jk}})\gamma_t^{-\frac{1}{2}}L_{jk}\gamma_t^{\frac{1}{2}}.
\end{align}
Then, similar to the above evaluation, we get 
\begin{align}
    \comm{\ln\gamma_t}{\gamma_t^{-1/2}L_{jk}\gamma_t L_{jk}\+\gamma_t^{-1/2}}=0.    
\end{align}
Combining altogether, we have 
\begin{align}
    \comm{\ln\gamma_t}{M(\gamma_t)}=0.
\end{align}
Let us write $M(\gamma_t)$ with respect to the eigenbasis of $\gamma_t\stackrel{s.d.}{=} \sum \lambda_t \op*{\lambda_t}$ such that $M(\gamma_t)=\sum_{\lambda_t,\lambda_t'} a_{\lambda_t,\lambda_t'}\ketbra{\lambda_t}{\lambda_t'}$. Then, we note that $\comm{\ln\gamma_t}{M(\gamma_t)}=0$ implies $(\sqrt{\lambda_t}-\sqrt{\lambda_t'}) a_{\lambda_t,\lambda_t'} =0$:
    \begin{align}
        &\comm{\ln\gamma_t}{M(\gamma_t)}=0 \\
        &\Leftrightarrow \sum_{\lambda_t,\lambda_t'} (\ln\lambda_t-\ln\lambda_t') a_{\lambda_t,\lambda_t'}\ketbra{\lambda_t}{\lambda_t'} =0 \\
        & \Leftrightarrow (\ln\lambda_t-\ln\lambda_t') a_{\lambda_t,\lambda_t'} =0 \quad \forall\, \lambda_t,\lambda_t'\\
        & \Leftrightarrow (\lambda_t-\lambda_t') a_{\lambda_t,\lambda_t'} =0 \quad \forall\, \lambda_t,\lambda_t'\\
        & \Leftrightarrow (\sqrt{\lambda_t}-\sqrt{\lambda_t'}) a_{\lambda_t,\lambda_t'} =0 \quad \forall\, \lambda_t,\lambda_t'.
\end{align}
Finally, we can conclude that $H_{\texttt{C},\gamma_t}=0$:
    \begin{align}
        &H_{\texttt{C},\gamma_t}\nonumber\\
        &\equiv\frac{1}{2i}\sum_{\lambda_t,\lambda_t'}\left(\frac{\sqrt{\lambda_t}-\sqrt{\lambda_t'}}{\sqrt{\lambda_t}+\sqrt{\lambda_t'}}\right)\bra{\lambda_t}M(\gamma_t) \ket{\lambda_t'} \op{\lambda_t}{\lambda_t'}\\
        &= \frac{1}{2i}\sum_{\lambda_t,\lambda_t'}\left(\frac{\sqrt{\lambda_t}-\sqrt{\lambda_t'}}{\sqrt{\lambda_t}+\sqrt{\lambda_t'}}\right) a_{\lambda_t,\lambda_t'} \op{\lambda_t}{\lambda_t'}\\
        &= \frac{1}{2i}\sum_{\lambda_t,\lambda_t'}\frac{1}{\sqrt{\lambda_t}+\sqrt{\lambda_t'}} \left(\underbrace{(\sqrt{\lambda_t}-\sqrt{\lambda_t'})a_{\lambda_t,\lambda_t'}}_{=0}\right) \op{\lambda_t}{\lambda_t'}\\
        &=0.
    \end{align}

\end{document}